\documentclass[11pt]{article} 
\usepackage{setspace}
\usepackage{times}
\usepackage[top=1.0in, bottom=1.0in, left=1in, right=1in]{geometry}
\usepackage[usenames,dvipsnames]{color}
\usepackage{xcolor}
\usepackage[table]{xcolor} 
\usepackage{colortbl} 
\usepackage{array}    
\usepackage{tabularx}      
\usepackage[colorlinks=true,citecolor=blue,linkcolor=violet]{hyperref}
\usepackage{hyperref}
\usepackage{amsmath}
\usepackage{url}
\usepackage{bm} 
\usepackage{pdflscape}
\usepackage{bbm} 
\usepackage{amsbsy} 
\usepackage{tikz} 
\usepackage[nameinlink]{cleveref} 
\usepackage{mathtools}
\usepackage{tablefootnote} 
\usepackage{natbib}
\usepackage{amsfonts} 
\usepackage{amssymb}  
\usepackage{amsthm}   
\usepackage{wasysym} 
\usepackage{amsxtra}
\usepackage{bm}
\usepackage{enumitem}
\usepackage{multirow}
\usepackage{multicol}
\usepackage{caption, subcaption}
\usepackage{nomencl}
\usepackage{color,colortbl}
\usepackage{dsfont}
\usepackage{bold-extra}
\usepackage{fancybox}
\usepackage{fancyhdr}
\usepackage{graphics}
\usepackage{graphicx}
\usepackage{circuitikz} 
\usepackage{tikz}
\usetikzlibrary{patterns}
\usepackage{latexsym}
\usepackage{longtable}
\usepackage{wrapfig}
\usepackage{float}
\usepackage{comment}
\usepackage{algorithm,algorithmic} 

\usepackage{caption}
\newtheorem{theorem}{Theorem}
\newtheorem{lemma}[theorem]{Lemma}
\newtheorem{proposition}[theorem]{Proposition}

\newtheorem{remark}{Remark}
\newtheorem{example}{Example}

\usepackage[normalem]{ulem}

\newcommand{\myref}[2]{\hyperref[#2]{#1 \ref*{#2}}} 

\usepackage{array}
\newcolumntype{H}{@{}>{\iffalse}c<{\fi}@{}} 

\begin{document}
\title{Liquidation Dynamics in DeFi and the Role of Transaction Fees}
\author{Agathe Sadeghi\thanks{Stevens Institute of Technology, School of Business, Hoboken, NJ 07030. \textit{asadeghi@stevens.edu} Corresponding author.} \and Zachary Feinstein\thanks{Stevens Institute of Technology, School of Business, Hoboken, NJ 07030.}}
\date{}
\maketitle
\abstract{
Liquidation of collateral are the primary safeguard for solvency of lending protocols in decentralized finance. However, the mechanics of liquidations expose these protocols to predatory price manipulations and other forms of Maximal Extractable Value (MEV). In this paper, we characterize the optimal liquidation strategy, via a dynamic program, from the perspective of a profit-maximizing liquidator when the spot oracle is given by a Constant Product Market Maker (CPMM). We explicitly model Oracle Extractable Value (OEV) where liquidators manipulate the CPMM with sandwich attacks to trigger profitable liquidation events. We derive closed-form liquidation bounds and prove that CPMM transaction fees act as a critical security parameter. Crucially, we demonstrate that fees do not merely reduce attacker profits, but can make such manipulations unprofitable for an attacker. Our findings suggest that CPMM transaction fees serve a dual purpose: compensating liquidity providers and endogenously hardening CPMM oracles against manipulation without the latency of time-weighted averages or medianization.
}\\
\noindent\textbf{Keywords:} Lending Protocols, Automated Market Makers, Oracle Extractable Value, Liquidation events

\newpage
\section{Introduction}\label{sec:intro}
Decentralized Finance (DeFi) has grown exponentially since 2018, transforming the traditional financial landscape by offering blockchain-based financial services within the crypto ecosystem. One of the most significant developments in DeFi is the rise of borrowing and lending platforms (LPs), which allow crypto asset holders to earn interest or access additional funds by utilizing their existing assets. Since 2020, the outstanding debt in DeFi lending has increased dramatically, driven by the wider adoption of stablecoins and their role in providing stability to the volatile cryptocurrency market. As of this writing LPs hold around \$80 billion in total value locked (TVL).\footnote{\url{https://defillama.com/protocols/lending}, accessed October 2025}

The standard operational process within these platforms includes lenders, who have surplus funds, supplying assets which are often stablecoins, to a specific lending smart contract. The utilization ratio which is the proportion of supplied liquidity currently borrowed in relation to the total amount of funds available, determines the interest rate on lending platforms. Each lender receives a platform-specific utility token representing their deposit, often functioning as a certificate of deposit. These tokens accrue interest and have a value equivalent to the underlying asset. Depositors can withdraw their funds at any time, typically by redeeming their utility tokens back to the platform.

In exchange, borrowers provide risky crypto assets, like Bitcoin or Ether, as collateral to back loans, which are frequently denominated in stablecoins. Smart contracts act as the direct interface connecting depositors and borrowers, streamlining the entire loan lifecycle, which includes loan issuance, repayment tracking, and collateral liquidation. Furthermore, smart contracts are responsible for managing collateral by assigning a specific margin to each collateral type and rigorously enforcing overcollateralization requirements \citep{capponi23}.

\subsection{Motivation}\label{sec:intro-motivation}
Despite its innovations, DeFi lending is not immune to risks associated with market volatility, liquidity mismatches, and operational inefficiencies \citep{gudgeon20_1}. Additionally, the trade-off between risk and reward in DeFi lending can be analyzed empirically \citep{lehar22}. Volatile crypto prices frequently trigger loan liquidations when borrowers fail to maintain their Loan-to-Value (LTV) ratios.
LTV ratios quickly deteriorate in response to a drop in collateral prices, which leads to automated liquidations. Lending platforms run the risk of bad debt in the absence of prompt liquidation since declining collateral values might not be sufficient to cover the outstanding debt. These forced sales have the potential to further depress collateral prices, which could trigger a chain reaction of liquidations throughout the ecosystem. 
For example, during the January 2022 crypto sell-off, liquidations surged to their highest level since May 2021, erasing \$50 billion in borrowed asset value.\footnote{International Monetary Fund (2022). \textit{Global Financial Stability Report: Shock waves from the War in Ukraine Test the Financial System’s Resilience}, Chapter 3. Available at: \url{https://www.imf.org/-/media/Files/Publications/GFSR/2022/April/English/ch3.ashx}} DeFi platforms set their own maximum LTVs based on factors such as collateral type and market volatility.

Liquidation events are essential automated mechanisms in DeFi lending protocols that are intended to safeguard the system's solvency. The main cause of these occurrences is when a borrower's Health Factor (HF), calculated as the ratio of value of collateral to debt, drops bellow a specified liquidation threshold. While the LTV ratio calculates the borrowed value relative to the collateral, HF is the inverse relationship (adjusted for the liquidation threshold) indicating a position's proximity to liquidation. The protocols' solvency is undermined when the value of the collateral assets, which are usually risky cryptocurrencies falls drastically. The loan is now automatically considered eligible for liquidation since it is undercollateralized. This liquidation necessity demonstrates the importance of having a reliable price for the collateral asset available for the lending protocol contract.

These prices can be obtained from off-chain suppliers like Chainlink or from decentralized exchanges' on-chain oracles such as Uniswap. However, top platforms, like Aave and Compound rely on further robustness mechanisms like time-weighted average price (TWAP) feeds that update once per block, in order to reduce the risk of short-term manipulation of spot markets. Although these defenses increase the oracle's resistance to manipulation, they also create weaknesses in times of rapid crisis (like during the collapse of Terra/Luna) when off-chain oracles might malfunction or TWAPs might react too slowly to reflect sharp price drops. Using off-chain or time-averaged oracle mechanisms usually requires more conservative LTV thresholds to account for update latency, which ultimately lowers capital efficiency and raises the collateral requirements for borrowers.

Although it is theoretically possible for borrowers to avoid liquidation by adding more collateral or manually repaying their loans, this is frequently not feasible for regular users. Timely manual intervention is difficult due to the need for continuous monitoring of fluctuating collateral prices and transaction fee variability. Theoretically, any blockchain user in the DeFi ecosystem can keep an eye on HFs and start the liquidation process on these undercollateralized positions. But in reality, liquidation in DeFi has become a very specialized task that is primarily carried out by advanced bots. With the help of flash loans \citep{qin21} and other algorithms, these bots can quickly liquidate large amounts of undercollateralized positions, removing the need for capital. Over 70\% of liquidable positions are now instantly liquidated by these automated agents, according to empirical data, demonstrating the significant improvement in these liquidators' efficiency over time \citep{perez21}. The substantial potential profits, which can occasionally amount to hundreds of thousands of dollars from a single transaction, are what propel this high speed and scale and encourage ongoing advancements in liquidation methods.

We investigate the use of constant product market makers (CPMMs) as an on-chain, trustless and without a lag spot oracles. To assess their robustness, we explicitly model the Oracle Extractable Value (OEV) and Maximal Extractable Value (MEV) accessible to sophisticated liquidators who can manipulate prices within a single block. We then study how transaction fees imposed by AMMs can act as an inherent safeguard, enhancing intra-block spot oracles’ robustness to transient attacks: by decreasing the marginal profitability of price manipulation, AMM trading fees alter the incentives of liquidators and potential attackers. More precisely, when transaction fees are incorporated into liquidation and MEV-driven tactics such as sandwiching liquidation events, profitability declines and eventually becomes negative beyond a critical threshold fee. This highlights the dual function of transaction fees, often neglected in prior research, in compensating liquidity providers while strengthening oracle resilience by deterring intra-block price manipulation that might otherwise trigger unnecessary liquidations.
\subsection{Literature Review}\label{sec:intro-litrev}
DeFi lending platforms operate by receiving crypto assets as deposits and lending them out to borrowers who meet specific collateral requirements. These platforms rely on liquidity pools, which are pools of deposited assets made available for lending. Users who deposit their crypto assets into these pools earn interest in return. Aave, MakerDAO and Compound  are notable instances of decentralized lending protocols that have become very popular. Among DeFi lenders, Aave stands out for having the highest total value locked \citep{iftikhar25}.

For borrowers, DeFi platforms offer the ability to borrow crypto assets from these liquidity pools by posting their deposited assets as collateral. The interest rate on borrowing, denominated in the borrowed asset \citep{john23}, varies based on the utilization rate\footnote{The utilization rate of a crypto asset represents the proportion of total loans to total deposits for that asset on the platform. When the deposit pool has greater available liquidity, the lending rate tends to be lower.} of the lending pool, meaning higher demand for an asset can drive up borrowing costs. This mechanism helps balance supply and demand within the system. Collateral is central to managing risks in DeFi lending, as it safeguards the platform against market volatility. Borrowers are often required to overcollateralize their loans \citep{wang22}, meaning they must provide collateral worth more than the loan itself. This is achieved by applying a collateral factor which is a discount rate assigned to each asset type. For example, if an asset has a collateral factor of 0.8, a borrower can only borrow up to 80\% of the posted collateral's value. Some assets, like Tether (USDT) on certain platforms, have a collateral factor of zero, meaning they cannot be used as collateral. As noted by \cite{cornelli24}, the anonymity and volatility of crypto assets necessitate overcollateralization as the primary risk management tool, unlike traditional banking, which uses undercollateralized loans backed by diverse assets like real estate. This reliance on crypto assets as collateral also makes DeFi inherently self-referential.

Borrowers on DeFi platforms can repay their loans at any time, but they must continuously meet collateral requirements. If the value of their collateral falls below the required threshold due to adverse price movements, liquidators are incentivized to step in. Liquidators repay the borrower’s debt and acquire the collateral in exchange for a reward, known as the liquidation bonus. This mechanism ensures that the platform remains solvent even during market volatility. However, the liquidation process can be costly, particularly when collateral shortfalls arise during periods of high volatility, leaving the platform vulnerable to further destabilization if liquidations are delayed. 

DeFi lending also facilitates leveraged trading strategies, such as leveraged longs and short selling, which are widely used by investors. For instance, borrowers may post volatile crypto assets, such as Ethereum or Wrapped Bitcoin, as collateral to borrow stablecoins. Stablecoins are a type of cryptocurrency specifically created to maintain price stability \citep{clark20}. Their value is usually tied to a reference asset, such as the US dollar, to minimize fluctuations. For a thorough understanding of these coins we refer the reader to \cite{klages20}.

Many studies explore lending and borrowing in the context of DeFi. \cite{bartoletti21, bartoletti22} demonstrate essential behavioral characteristics of decentralized lending and borrowing protocols. \cite{perez21} presents an abstract framework to analyze the state of decentralized lending and borrowing protocols. \cite{szpruch24} explores loan contracts as American perpetual barrier options. \cite{mueller24} observes that the liquidation mechanism in DeFi lending protocols amplifies both the costs and risks associated with leveraged investments.

The liquidation mechanism should aim to minimize both the frequency of liquidation events and the total number of transactions, given the constrained nature of blockchains \citep{qin21_1}. \cite{qin23} proposes reversible call options, which allows the seller of a call option to cancel the contract before its maturity date. \cite{iragorri21} finds that if liquidators are incentivized sufficiently, the risk of undercollateralization will be kept low.

Bad debt can happen as well. This occurs when the value of the borrower’s collateral is insufficient to cover their outstanding debt (under-collateralizitation) after liquidation, exposing the protocol to financial losses \citep{gudgeon20}. The accumulation of bad debt diminishes the total liquidity available in a lending protocol, resulting in higher interest rates for borrowers. If the lending pool becomes entirely comprised of bad debt, lenders will be unable to withdraw their funds. Another risk to consider is that the liquidator's transaction fee may exceed the value of the discounted collateral obtained, failing to provide an incentive for them to assist in closing the position.

Peer-to-peer and peer-to-pool marketplaces known as Decentralized Exchanges (DEXs) facilitate direct transactions between cryptocurrency traders. A key potential of decentralized finance is embodied by this architecture: enabling financial transactions without the need for conventional middlemen like banks, brokers, or payment processors.

There have been two main developments in DEX architectures. Launched in 2016, Order-Book-Based DEXs are the first version of these exchanges \citep{daian20}. By using a central limit order book where buyers and sellers place bid and ask orders, which are subsequently matched to execute trades, these platforms function similarly to traditional stock exchanges. With the advent of AMMs in 2018, a major architectural breakthrough was made possible by protocols such as Uniswap \citep{angeris20}. The AMM model uses liquidity pools, which usually consist of two separate asset reserves, to facilitate trading rather than an order book. An AMM's core implementation is made possible by a number of smart contracts that run on a blockchain. In order to determine prices algorithmically, pioneering AMMs like Uniswap v2 use a constant product function, which can be mathematically expressed as $x \times y = k$, where $x$ and $y$ are the quantities of two tokens in the pool and $k$ is their constant product \citep{angeris21}. This mathematical connection guarantees that the current supply and demand dynamics within that particular pool will determine the token price.

Transaction and gas fees are important aspects of user behavior and market dynamics in DEXs as well. On blockchain platforms like Ethereum, gas fees, which are based on network congestion and the computational complexity of transactions, act as incentives for validators to add transactions to blocks \citep{cong21}. DEX protocols may impose transaction fees in addition to gas in order to fund liquidity providers or platform upkeep \citep{capponi21}. These expenses can have a big influence on trading tactics, especially in situations with high frequency or low margins where fee optimization is crucial. The efficiency and accessibility of decentralized trading are shaped by the interaction of gas prices, transaction throughput, and user incentives. This interaction affects market participation as well as the general scalability of DEX ecosystems.

Volatility in collateral prices due to liquidations can exacerbate price impact in the DEX, creating feedback loops. That is why price impact matters. Price impact refers to the effect of asset liquidation on the market price of those assets, particularly in times of financial distress. \cite{zach17} introduces a framework that models the dynamics of financial contagion caused by price impact and fire sales in systems with multiple illiquid assets in traditional financial markets.

Maximal Extractable Value (MEV) is the extra profit that block producers, like miners, can make by choosing the order in which transactions go into a block. This matters a lot in areas like DeFi, where the order of transactions can affect the outcome \citep{daian20}. This ability to reorder transactions allows for practices akin to frontrunning \citep{eskandari19} and sandwich trading \citep{zhou21}, which would be considered illegal in traditional financial markets \citep{auer22}. Specialized actors known as searchers track pending transactions to spot profit-making opportunities, like arbitrage between exchanges, atomic arbitrage or liquidations on lending platforms. They then submit their own carefully crafted transactions or bundles to capture that value. Block producers can include these transactions in the blocks they create, often giving priority to those offering the highest fees or direct payments. By controlling the order and composition of transactions, they can extract additional value from the block \citep{qin22,li23}. MEV also leads to inefficiencies by filling blocks with profit-seeking transactions and pushing out valuable user activity \citep{capponi25}. Some work has been published on how to mitigate the MEV effect \citep{zhou21-1,heimach22,yang23}. There are also other kind of attacks that can be done in DeFi, see e.g., \cite{lee23}.

One specific kind of attack involves deliberately pushing the price of the risky asset down by selling large amount of it first. This is putting a sell order at the beginning of the block, i.e., frontrunning. Followed with other orders, the attacker puts a buy order at the end of the same block, allowing them to repurchase the asset with the lowered price, i.e., backrunning. They can potentially profit from the classic selling high, buying low strategy, known as sandwiching in the DeFi space. When the initial frontrunning is used to manipulate the price of an asset for use in another smart contract, the profits from this attack can be characterized as OEV. We will explicitly consider attacks of this type within this work.
\subsection{Primary Contributions}\label{sec:intro-contribs}
\paragraph{Optimal Liquidations} In contrast to models focused on the lending platform’s perspective and the mitigation of bad debt (see, e.g., \cite{qin21_1}), we examine liquidation spirals from the liquidator's point of view in Section~\ref{sec:optimiz-single}. In order to isolate how real-time price impact shapes liquidation incentives, particularly during fast-moving crises, we assume that the lending platform quotes the intra-block spot price to measure the health of the loan's collateral. A dynamic program for the liquidator's profit maximization within a single block along with a semi-analytic characterization of the value function and an implementable algorithm are presented.

\paragraph{Switching Condition} We monitor three practical limits on how far a troubled position can be pushed by a liquidator. First, the maximum amount of collateral that the liquidator can confiscate (after deducting the discount and fees) is known as the collateral cap. Second, the debt cap, is the amount of liquidation that completely settles the borrower's outstanding debt. If the liquidator reaches the debt cap, even if most or all of the collateral was taken, the loan is closed and there is no bad debt. Third, the closing-factor cap, is not a hard policy limit; it’s the HF checkpoint. It is the liquidation amount at which the position's health factor reaches the pertinent threshold. These three make it simple to interpret the results: if we can either (a) reach the debt cap and retire the loan, or (b) reach the closing-factor cap and return the position's health to 1 before depleting collateral, the liquidation is successful. Bad debt only appears when we run out of collateral or the HF is restored to 1 before the debt is paid off in full. That is, as much collateral has been liquidated as possible but there is still a loan balance remaining to be paid.

\paragraph{Impact of Fees on MEV Profitability} We show how incorporating AMM trading fees into the liquidation and any MEV triggered (sandwich) attacks, compress the profitable region and can even cause reversion for bundles backed by flash loans in Section~\ref{sec:front}. We show that smaller, staged liquidations outperform lump-sum ones when fees are present. We numerically determine the fee levels at which all viable attacks lose their profitability in Section~\ref{sec:front-limit}, emphasizing fees as a design lever for strengthening AMM-based oracles in Section~\ref{sec:front-fees}.
\section{Background}\label{sec:background}
In this section, we discuss the stylized modeling framework for the lending and exchange protocols which we focus on within this work. In order to maintain consistency and clarity, we also introduce the notation that will be used throughout the paper.
\subsection{Lending Protocols}\label{sec:background-lending}
The health factor is a key statistic in DeFi lending platforms, measuring the stability of a borrower's position relative to the risk of liquidation. It represents the ratio of the liquidation-threshold-weighted collateral value to the total debt value, both denominated in a constant num\'{e}raire (such as ETH or USDC). The weights are determined by the collateral-specific liquidation thresholds (also known as the haircut rate), which reflect the maximum borrowing capacity of each asset before liquidation is triggered. The general formula for the health factor is given by
$$HF= \frac{\text{liquidation threshold}\times \text{amount of collateral} \times \text{price of collateral}}{\text{value of borrowed assets}}.
$$
The numerator sums the weighted values of all collateral tokens deposited by the user, adjusted by their respective liquidation thresholds. The denominator accounts for the user’s total debt. All values are denominated in, e.g., USDC to maintain consistency across assets. A higher HF reflects a more secure position, whereas a value below 1 indicates that the user’s position is undercollateralized and at risk of liquidation. Platforms, in practice, enforce two thresholds: if the HF falls below the Closing Factor (CF), then up to 100\% of the debt can be liquidated to fully resolve the position; if the HF falls below 1 but still higher than CF, only a portion ($\kappa$) of the borrower's debt becomes liquidatable.

Figure~\ref{fig:hf_schem} depicts the platform’s balance sheet before and after a liquidation. The platform holds $c$ units of collateral priced at $p_0$ against debt $b$ prior to liquidation; only a portion $\theta$ of the collateral's market value counts toward safety, so HF compares $\theta\,p_0\,c$ to $b$. A liquidation is allowed when HF is less than $1$ or the closing-factor threshold $CF$. In that step, a liquidator receives $x\,(1+\ell)$ units of collateral and repays $\beta(x)$ of the debt, where $x$ is the liquidation size and $\ell$ is the liquidation bonus. After liquidation, the remaining debt is $b-\beta(x)$, and the platform's held collateral becomes $c-x\,(1+\ell)$, marked at the updated price $p_1$. The collateral transferred out, $x\,(1+\ell)$, and the reduced debt, $\beta(x)$, are represented by the greyed out box at the top of the figure. After the liquidation, HF is defined as the same ratio, eligible collateral value over debt value, but calculated using the updated parameters. The position is healthy if the post-liquidation HF goes back to at least $1$. Otherwise, further liquidations might be necessary, and in the worst case, bad debt might remain. The formula for HF after the first liquidation is:
\begin{align}
HF&=\frac{\theta\,(c-x\,(1+\ell))\,p_1}{b-\beta(x)}\label{eq:hf_schem}
\end{align}

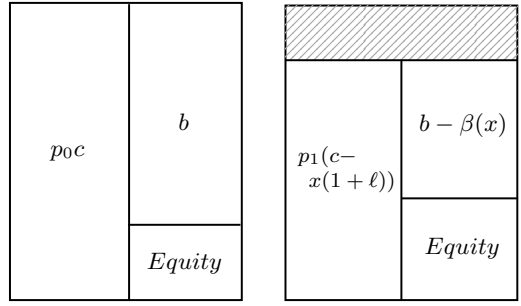
\begin{figure}[!tb]
\centering
\tikzset{every picture/.style={line width=0.75pt}} 

\begin{tikzpicture}[x=0.75pt,y=0.75pt,yscale=-1,xscale=1]

\draw   (101.75,79.67) -- (218.67,79.67) -- (218.67,229.67) -- (101.75,229.67) -- cycle ;
\draw    (161.83,80.33) -- (161.83,229.67) ;
\draw    (161.33,191.67) -- (218,191.67) ;
\draw   (240.71,81) -- (357.63,81) -- (357.63,229.67) -- (240.71,229.67) -- cycle ;
\path[pattern=north east lines, pattern color=gray!70]
  (240.71,81) rectangle (357.63,108.6);
\draw    (299.17,108.6) -- (299.17,230) ;
\draw    (299.33,178.33) -- (358,178.33) ;
\draw    (240.47,108.6) -- (357.87,108.6) ;

\draw (120.67,150.07) node [anchor=north west][inner sep=0.75pt]  [font=\footnotesize]  {$p_{0} c$};
\draw (169.33,204.07) node [anchor=north west][inner sep=0.75pt]  [font=\footnotesize]  {${\displaystyle Equity}$};
\draw (309.33,196.73) node [anchor=north west][inner sep=0.75pt]  [font=\footnotesize]  {${\displaystyle Equity}$};
\draw (185.33,134.07) node [anchor=north west][inner sep=0.75pt]  [font=\footnotesize]  {$b$};
\draw (306.67,134.07) node [anchor=north west][inner sep=0.75pt]  [font=\footnotesize]  {$b-\beta ( x)$};
\draw (239,149.38) node [anchor=north west][inner sep=0.75pt]  [font=\scriptsize,rotate=-0.03]  {$ \begin{array}{l}
p_{1}( c-\\
\ \ x( 1+\ell ))
\end{array}$};
\end{tikzpicture}
\caption{\footnotesize{Lending platform reserves \textbf{Left} initial state \textbf{Right} after liquidation state}}
\label{fig:hf_schem}
\end{figure}

Without timely liquidation, collateral shortfalls can undermine platform solvency. Overcollateralization mitigates some risks, but expected losses still average around 0.9\%, with riskier borrowers incurring larger losses as noted in \cite{imf22}. These dynamics underscore the importance of managing collateral and liquidation risks to ensure the resilience of DeFi platforms in volatile market conditions.
\subsection{Decentralized Exchanges}\label{sec:background-dex}
Let $A_0$ and $B_0$ denote the initial AMM reserves of a risky asset (e.g., collateral) and a stablecoin (e.g., USDC), respectively. The AMM charges a transaction fee $\gamma$ (e.g., 30 basis points in Uniswap v2) on the incoming asset. In practice, this means only $1-\gamma$ of the asset is being deposited in the pool, while liquidity providers receive the remaining amount as payment for their capital contributions. The constant product market maker (CPMM) rule in Uniswap v2 incorporates this fee structure directly, thereby modifying the reserves after each swap. Given this rule, the initial spot price at the CPMM is $p_0 = \frac{B_0}{A_0}$.
Under the scenario where we swap $a$ units of asset $A$ for $b$ units of asset $B$, the invariant must hold:
\begin{align*}
(A_0+a (1-\gamma))(B_0-b)=A_0B_0\, \quad \Rightarrow \quad 
b=\frac{B_0a (1-\gamma)}{A_0+a (1-\gamma)}
\end{align*}
Hence, after the swap, the updated reserves of the assets held by the CPMM will be $A_1=A_0+a (1-\gamma)$ and $B_1=B_0-b=\frac{A_0 B_0}{A_0 +a (1-\gamma)}$ while the updated price becomes $p_1=\frac{B_1}{A_1}$. Figure~\ref{fig:hyperbola} visualizes the reserve update along the constant product hyperbola when neglecting the fees ($\gamma = 0$).
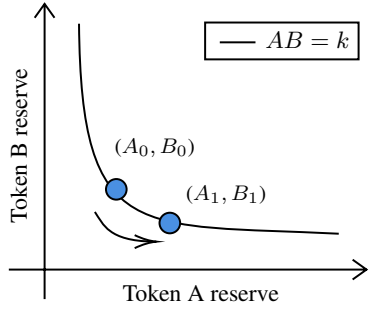
\begin{figure}[!tb]
\centering
\tikzset{every picture/.style={line width=0.75pt}} 

\begin{tikzpicture}[x=0.75pt,y=0.75pt,yscale=-1,xscale=1]

\draw  (50,227.7) -- (230.4,227.7)(68.04,93.6) -- (68.04,242.6) (223.4,222.7) -- (230.4,227.7) -- (223.4,232.7) (63.04,100.6) -- (68.04,93.6) -- (73.04,100.6)  ;
\draw    (85.4,103.6) .. controls (86.4,215.6) and (115.4,204.6) .. (216.4,209.6) ;
\draw  [fill={rgb, 255:red, 74; green, 144; blue, 226 }  ,fill opacity=1 ] (99,187.2) .. controls (99,184.33) and (101.33,182) .. (104.2,182) .. controls (107.07,182) and (109.4,184.33) .. (109.4,187.2) .. controls (109.4,190.07) and (107.07,192.4) .. (104.2,192.4) .. controls (101.33,192.4) and (99,190.07) .. (99,187.2) -- cycle ;
\draw  [fill={rgb, 255:red, 74; green, 144; blue, 226 }  ,fill opacity=1 ] (126,204.2) .. controls (126,201.33) and (128.33,199) .. (131.2,199) .. controls (134.07,199) and (136.4,201.33) .. (136.4,204.2) .. controls (136.4,207.07) and (134.07,209.4) .. (131.2,209.4) .. controls (128.33,209.4) and (126,207.07) .. (126,204.2) -- cycle ;
\draw    (93.4,198.6) .. controls (97.69,205.03) and (100.27,213.18) .. (122.64,213.58) ;
\draw [shift={(124.4,213.6)}, rotate = 180] [color={rgb, 255:red, 0; green, 0; blue, 0 }  ][line width=0.75]    (10.93,-3.29) .. controls (6.95,-1.4) and (3.31,-0.3) .. (0,0) .. controls (3.31,0.3) and (6.95,1.4) .. (10.93,3.29)   ;
\draw    (156.4,113) -- (174.4,113) ;
\draw   (149,103) -- (226.4,103) -- (226.4,121.6) -- (149,121.6) -- cycle ;

\draw (49.32,205.46) node [anchor=north west][inner sep=0.75pt]  [rotate=-270.24] [align=left] {{\footnotesize Token B reserve}};
\draw (106,234) node [anchor=north west][inner sep=0.75pt]   [align=left] {{\footnotesize Token A reserve}};
\draw (102,160) node [anchor=north west][inner sep=0.75pt]   [align=left] {{\scriptsize ($A_0, B_0$)}};
\draw (138,184) node [anchor=north west][inner sep=0.75pt]   [align=left] {{\scriptsize ($A_1, B_1$)}};
\draw (178,106) node [anchor=north west][inner sep=0.75pt]   [align=left] {{\footnotesize $AB=k$ }};
\end{tikzpicture}
\caption{\footnotesize{Constant product hyperbola. $k$ is a constant.}}
\label{fig:hyperbola}
\end{figure}

\section{Optimal Liquidation at Lending Protocols}\label{sec:optimiz}
The liquidation process in the lending platform creates an optimization problem: the searcher (same as liquidator in our case) aims to maximize liquidation profit under conditions such as (i) the HF must fall to or below 1 for liquidation to be valid, (ii) there must be sufficient available collateral and outstanding loan balance, and (iii) the prevailing market price updates based on the holdings of a CPMM. We interpret the resulting profits as an upper bound, as we ignore gas costs.

We specifically model this problem from the viewpoint of the liquidator, not the lending protocol, emphasizing the motivations and tactics of the liquidator. Two main strategies emerge for the searcher: Full liquidation in a single block to realize a one-time profit or a gradual liquidation, keeping HF just below 1 (or the closing factor) to repeatedly liquidate over multiple steps, potentially extracting more value.\footnote{Later, in Lemma~\ref{prop:dp}, we show that small liquidations are more profitable than a lump sum liquidation.} We use dynamic programming to model the strategic behavior of the liquidator, since the optimal action depends on expected future profits which in turn depend on future asset prices.

Additionally, we assume that the spot price from the CPMM is used by the lending protocol to calculate the HF. While most production platforms use TWAPs to reduce short-term volatility, using the instantaneous spot price gives a clearer picture of how intra-block price fluctuations directly affect liquidation incentives. Since the spot price is the primary driver from which TWAPs are ultimately derived, this modeling decision enables us to isolate and analyze the dynamic interaction between price impact and liquidation behavior without loss of generality which is the paper's main focus.

We also recall that the amount of a borrower's debt that can be liquidated is determined by two parameters; the health factor threshold CF, which determines when a position becomes eligible for liquidation and liquidation fraction $\kappa$, which specifies the fixed portion of the outstanding debt that can be repaid in a single liquidation transaction. We first analyze the case under fixed parameters, i.e., $(CF,\kappa)$ and afterwards generalize it to more flexible cases where several threshold pairs, i.e.\ $(CF_i,\kappa_i)$, are taken into account in order to identify the combination that maximizes profits and capture various liquidation regimes.

In this work, we extend the approach of, e.g., \cite{cohen23}, to incorporate transaction fees and examine scenarios where a liquidator can initiate liquidation by crashing the collateral price on DEXs (Section~\ref{sec:front}). We formalize this setup below, focusing on the mechanics of Aave\footnote{Aave is a DeFi lending platform built on the Ethereum blockchain, designed to facilitate borrowing and lending of cryptocurrency assets.} as the lending platform and Uniswap v2 as the DEX where the liquidity is distributed evenly across ticks.

It is helpful to introduce the shorthand $HF(c,b,A,B) := \frac{\theta B\,c}{A\,b}$ which represents the borrower's health factor in any given state. In this case, the \emph{updated} quantities corresponding to that state are $(c,b,A,B)$. In other words, following a liquidation trade of size $x$, the updated collateral becomes $c - x(1+\ell)$, the updated debt becomes $b - \beta(x)$ and the updated AMM reserves produce the updated price $p_1$. While the schematic expression in~\eqref{eq:hf_schem} showed how these variables change, the compact form above assesses the health factor following the updates. The corresponding health factor can therefore be obtained by directly substituting the current values of $(c,b,A,B)$ at each stage of the process.
\subsection{Single Threshold Health Factor}\label{sec:optimiz-single}
Following the notation introduced in Section~\ref{sec:background}, we now consider objective of a liquidator at the lending protocol. 
That is, finding the ideal liquidation size $x$ in a sequenced transaction in order to maximize the overall anticipated profit. Every liquidation generates an instant profit and, if the resulting health factor stays below the closing threshold $CF$, the liquidator can continue to make money from subsequent liquidations. In this way, given the current collateral $c$, outstanding debt $b$, and DEX reserves $(A,B)$, the ideal liquidation strategy is captured by the value function $V(\cdot)$ of the dynamic program:
\begin{align}
\label{eq:V} V(c,b,A,B) &= \begin{cases} \underset{\substack{x \geq 0 \\ x \leq x_c \\ x \leq x_{\kappa b} \\ x \leq x_{cf}}}{\sup} \Bigg[\begin{array}{l} \frac{Bx(1+\ell)(1-\gamma)}{A+x(1+\ell)(1-\gamma)}-\frac{Bx}{A} \\ + V(c-x(1+\ell),b-Bx/A,\bar{A},\bar{B})\end{array}\Bigg] & \text{if } HF(c,b,A,B) \leq CF \\ 0 & \text{if } HF(c,b,A,B) > CF  \end{cases}\\
\nonumber \bar{A} &= A+x(1+\ell)(1-\gamma), \\
\nonumber \bar{B} &= \frac{AB}{A+x(1+\ell)(1-\gamma)}.
\end{align} 
The liquidator selects the liquidation size $x$ at each stage within the feasible range of $0$ to $\min\{x_c,x_{\kappa b}, x_{cf}\}$, where $x_c = \frac{c}{1+\ell}$ and $x_{\kappa b} = \frac{\kappa bA}{B-\kappa b(1-\gamma)(1+\ell)}$ represent the amount to be liquidated to clear all collateral and debt (subject to the $\kappa$ constraint), respectively and $x_{cf}$ is the amount to be liquidated to revive the $HF=CF$. The liquidation profit $\frac{Bx(1+\ell)(1-\gamma)}{A+x(1+\ell)(1-\gamma)}-\frac{Bx}{A}$ is the immediate payout from performing the liquidation at the DEX where $\frac{Bx}{A}$ is paid to the lending protocol to claim the collateral. The process proceeds recursively, updating the state variables in accordance with the constant product pricing rule and accounting for transaction fees $\gamma$ and discount rate $\ell$, if the health factor following liquidation, $HF$, stays below the closing factor $CF$.

\begin{remark}
During liquidation, the health factor in~\eqref{eq:hf_schem} updates to reflect changes in collateral, debt, and price. After swapping collateral into the debt asset against a CPMM with reserves $(A,B)$, the post swap price is
$$p_1=\frac{B-y}{A+x(1-\gamma)(1+\ell)}=\frac{BA}{\bigl(A+x(1-\gamma)(1+\ell)\bigr)^2}\,,$$
where
\begin{align}
    y=\frac{B\,x(1+\ell)(1-\gamma)}{A+x(1+\ell)(1-\gamma)} \label{eq:y}
\end{align}
is the amount of debt asset received through the swap.
Substituting into the updated health factor yields
\begin{equation}
HF
=
\frac{\theta (c-x(1+\ell))BA}
{\kappa b\bigl(A+x(1-\gamma)(1+\ell)\bigr)^2
 -Bx\bigl(A+x(1-\gamma)(1+\ell)\bigr)}.
\label{eq:hf_fee_remark}
\end{equation}
Since $HF>0$, the liquidator's trade size $x$ must satisfy two feasibility constraints:
\begin{align}
c-x(1+\ell)\ge 0
\quad&\leftrightarrow\quad
x \le x_c := \frac{c}{1+\ell}
\label{eq:xc}\\
\kappa b-\frac{Bx}{A+x(1-\gamma)(1+\ell)}\ge 0
\quad&\leftrightarrow\quad
x \le x_{\kappa b} := \frac{\kappa bA}{B-\kappa b(1-\gamma)(1+\ell)}
\label{eq:xb}
\end{align}
The bound $x_c$ prevents the liquidator from withdrawing more collateral than the borrower holds, while $x_{\kappa b}$ prevents repaying more than the outstanding debt (subject to the maximum liquidation size $\kappa$). Throughout this work, in addition, we often consider the bound $x_b$ for repaying the full debt; this value results from taking $\kappa = 1$ in the above equation for $x_{\kappa b}$.
Often, for tractability, we impose the additional condition
$$b \le \frac{B}{(1-\gamma)(1+\ell)}\,,$$
in order to ensure the AMM has sufficient debt-asset liquidity relative to the position being unwound.
Finally, the trade size at which the liquidation threshold is met, $HF=CF$, is obtained by solving
$$\frac{\theta (c-x(1+\ell))BA}{b\bigl(A+x(1-\gamma)(1+\ell)\bigr)^2-Bx\bigl(A+x(1-\gamma(1+\ell)\bigr)}=CF\,,$$
which gives
\begin{equation}
x_{cf}=\frac{\Lambda + \sqrt{D}}{2CF\cdot\Gamma}\,,
\label{eq:x_cf}
\end{equation}
where $\Lambda=CF\cdot\bigl(2Ab(1-\gamma)(1+\ell)-BA\bigr)+\theta BA(1+\ell)$, $\Gamma=B(1-\gamma)(1+\ell) - b\,(1-\gamma)^2(1+\ell)^2$
and $D=\Lambda^2+4CF\cdot\Gamma\bigl(CF\cdot bA^2-\theta BAc\bigr)$.\\
In the optimization problem~\eqref{eq:V}, subject to $CF \geq HF(c,b,A,B)$, the admissible search domain for $x$ is therefore
$$0 \le x \le \min\{x_c,\;x_{\kappa b},\;x_{cf}\}\,,$$ where the liquidation thresholds are taken as infinity if they involve division by $0$.
\end{remark}

\begin{lemma} \label{prop:dp}
Consider a lending protocol holding collateral $c$ and debt $b$ with closing factor $CF$ and liquidation threshold $\kappa$. 
Consider, also, a CPMM with initial reserves $(A,B)$ in the collateral and debt assets, respectively.
The optimal profit~\eqref{eq:V} obtained by the liquidator is given as the output of Algorithm~\ref{algo:pi_liq}.
This profit is obtained by incrementally claiming and selling marginal $dx$ units of the collateral until the health factor equals $CF$ and then completing a final, finite sized, liquidation.
\end{lemma}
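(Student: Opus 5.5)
The argument compares the value of a block of liquidations of total size $x$ taken in one step with the same total split into pieces. The key observation is that the immediate payoff $\pi(x;A,B) = \frac{Bx(1+\ell)(1-\gamma)}{A+x(1+\ell)(1-\gamma)} - \frac{Bx}{A}$ comes from a concave CPMM output. Because of this, chunking the collateral sale does not lose proceeds at the DEX: selling $x_1$ and then $x_2$ against the updated reserves $(\bar A,\bar B)$ gives the same total output as selling $x_1+x_2$ at once. This is path independence of the constant-product swap. The one caveat is fees: with $\gamma>0$, the fee stays out of the pool, so the sequential output is in fact \emph{larger}.

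The payment $\frac{Bx}{A}$ to the protocol is different. It is priced at the current spot $B/A$, and the spot \emph{falls} after each sale. So splitting a liquidation into pieces $x_1,\dots,x_n$ makes the total repayment $\sum_i x_i p_{i-1}$ strictly smaller than $x\,p_0$. The first step is therefore to show
\[
\pi(x_1+x_2;A,B) \le \pi(x_1;A,B) + \pi(x_2;\bar A,\bar B),
\]
that is, subadditivity of the profit along the reserve path. Passing to the limit, the profit of a sequence of vanishing steps is $\int (\text{marginal DEX output} - p(s))\,ds$ along the curve $p(s) = AB/(A+s(1+\ell)(1-\gamma))^2$. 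This continuum strategy dominates any finite partition over the same total size.

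Next I would check that infinitesimal steps keep the state feasible. Since the spot price falls and the collateral shrinks, the health factor \eqref{eq:hf_fee_remark} moves monotonically along the path, and it either stays $\le CF$ or crosses $CF$ at some cumulative size. I would verify monotonicity by differentiating $HF$ along the incremental trajectory $(c-s(1+\ell),\,b-\int p,\,A(s),B(s))$. As long as $HF\le CF$, the recursion in \eqref{eq:V} permits another step, so the liquidator keeps taking marginal $dx$ units.

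The continuous phase stops in one of two ways. Either the collateral or debt cap is reached, in which case the process ends; or the health factor would exceed $CF$ after the next infinitesimal step. At that boundary state the liquidator may still take one more admissible trade of any size up to $\min\{x_c,x_{\kappa b},x_{cf}\}$ evaluated at the boundary state, because the eligibility condition in \eqref{eq:V} is checked \emph{before} the trade. After that trade $V=0$. So the final step is a one-dimensional maximization of $\pi$ over that interval, which Algorithm~\ref{algo:pi_liq} computes. I would then show that this combined policy attains the supremum in \eqref{eq:V}: any admissible policy is a finite partition, and by the subadditivity step it can be refined into marginal steps up to the boundary, followed by its residual last trade, without lowering profit. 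This gives the structure stated in the lemma.

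The main obstacle is this refinement. Refining a policy changes the state at which the last finite trade is executed, and in particular changes its caps, so I must make sure that refining does not shrink the feasible final trade. I would handle it by a monotone-comparison argument: define the boundary state as the first point on the continuous path where $HF=CF$. Then show that the value of the best final trade from any earlier eligible state is at most the refined value obtained by continuing marginally to that boundary and trading there, using the same subadditivity inequality with the final trade treated as the last piece.
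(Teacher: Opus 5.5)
Your outline follows the paper's: subadditivity because repayment is priced at a falling spot, the integral along $p(s)$, then one final trade. The health-factor step, however, has a genuine gap.

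\textbf{HF is not monotone along the marginal path.} With $c(s)=c-s(1+\ell)$, $A(s)=A+s(1-\gamma)(1+\ell)$ and debt $b-\int_0^s p$,
\[
\frac{d}{ds}\ln HF(s)=\frac{HF(s)-\theta(1+\ell)}{\theta\,c(s)}-\frac{2(1-\gamma)(1+\ell)}{A(s)}.
\]
This is negative whenever $HF<\theta(1+\ell)$, which is the spiral regime (e.g.\ $HF=50\%$ in Example~\ref{ex:k_mults}). When $HF$ lies just below $\theta(1+\ell)+2\theta(1-\gamma)(1+\ell)c/A$, it starts negative and then turns positive, so HF dips and then rises. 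What is true is single crossing: wherever this derivative vanishes, its own derivative is positive. Hence $\{s:HF(s)\le CF\}$ is an interval $[0,s^*]$.

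\textbf{No property of the marginal path alone justifies your refinement.} This is the more important point. To refine an admissible policy with cumulative sizes $S_1<\dots<S_n$, you need all its non-final trades to start inside $[0,s^*]$. A coarse policy still eligible beyond $s^*$ could make several trades there, and collapsing them into one final trade goes the wrong way under subadditivity. The missing ingredient is the paper's second condition, $HF(x_1+x_2)\ge HF(x_2;x_1)$. At equal cumulative collateral claimed, a coarser policy leaves the same collateral and the same DEX reserves. By cumulative size $S_j$ it has repaid $\sum_{i\le j}x_i\,p(S_{i-1})\ge\int_0^{S_j}p$, so it has less debt outstanding and a weakly \emph{higher} HF than the marginal path. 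Eligibility of the policy at $S_{n-1}$ therefore gives $HF_{\mathrm{marg}}(S_{n-1})\le CF$, and single crossing then gives $S_{n-1}\le s^*$. Your phrase ``from any earlier eligible state'' presupposes exactly this.

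\textbf{Two smaller corrections.} First, in the paper's CPMM the fee is withheld from the pool, so $\bar A\bar B=AB$. Selling $x_1$ then $x_2$ therefore returns exactly the output of selling $x_1+x_2$; the sequential output is not larger. Were the fee retained in the pool, as in actual Uniswap v2, it would be strictly smaller and would work against you. It is this exact path independence, not concavity, that makes the DEX price at a given cumulative size identical across policies, which the comparison above uses.

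Second, you cap the final trade by $x_{cf}$ evaluated at the boundary state. There $HF=CF$ and HF is increasing, so a cap forbidding $HF>CF$ is zero, and you would get $\pi_{last}=0$ rather than the $x_{last}=\min\{x_c-x_{cf},x_{\kappa b},x^*\}$ of Algorithm~\ref{algo:pi_liq}. You must read the HF condition in \eqref{eq:V} purely as an eligibility test before each trade, with the last trade limited only by collateral and $\kappa$ (as the algorithm does). Then use concavity of the one-shot payoff to obtain $x^*$.

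With these repairs, your handling of the final trade goes through; the paper's appendix does not treat it at all, only comparing $\pi_{liq}$ with the one-shot payoff. After refinement, the residual piece repays at a lower price, while $\kappa$ times the outstanding debt drops by at most the debt repaid in between, so the $\kappa$-cap only loosens.
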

\begin{proof}
Herein we provide a sketch of the logic. For the complete proof refer to Appendix~\ref{app:dp}. 
First, we find that it is optimal for the liquidator to make sequential small liquidations over a single large liquidation. In this way the optimal strategy is to liquidate marginal $dx$ units of the collateral for as long as possible. Notably, this occurs up to whichever of the following conditions is met first: the collateral has been fully used ($x_c$), the loan is fully repaid ($x_b$), or the health factor has recovered up to $CF$ ($x_{cf}$). 
Because such liquidations occur at marginal size, we are able to find the total profits from this stage via the integral
\begin{equation}
    \pi_{liq}=\int_0^{x_{liq}} \frac{(B-y)((1-\gamma)(1+\ell)-1)}{A+x(1-\gamma)(1+\ell)}dx = \frac{B((1-\gamma)(1+\ell)-1)x_{liq} }{A+\,x_{liq}(1-\gamma)(1+\ell)} \label{eq:pi_liq}
\end{equation}
At this time, if further liquidations can be accomplished (i.e., $x_{liq} = x_{cf}$), the liquidator makes one final large transaction chosen to either exhaust the remaining collateral ($x_c - x_{cf}$), repay the maximum allowable fraction of the loan ($x_{\kappa b}$), or maximize total profitability ($x^*$).
\end{proof}
\begin{algorithm}[t]
\caption{\textbf{Liquidation Profit Dynamics $L(CF, \kappa)$}}
\label{algo:pi_liq}
\textbf{Input:} $A$ (collateral amount in DEX), $B$ (debt amount in DEX), $\gamma$ (transaction fee), $b$ (debt in LP), $c$ (collateral in LP), $\theta$ (haircut rate), $\ell$ (discount rate), $CF$ (closing factor), $\kappa$ (max liquidation rate) \\
0. Compute: $HF(c,b,A,B)=\frac{\theta B c}{A b}$, \textbf{If} $HF>CF$ or $\gamma\leq\frac{\ell}{1+\ell}$ \textbf{then Return:} $\pi_{\text{tot}} = 0$\\
1. Compute: $x_c = \frac{c}{1+\ell}, \quad x_b = \frac{bA}{B - b(1-\gamma)(1+\ell)}$\\
2. Compute: $x_{cf} = 
\frac{\Lambda \,+\, \sqrt{D}}{2CF\cdot\,\Gamma},$
\begin{flushleft}
$\left|\begin{array}{l}
\Lambda = -CF\cdot\bigl(2Ab\,(1-\gamma)(1+\ell) - BA\bigr)+ \theta BA(1+\ell)\\
\Gamma = B\,(1-\gamma)(1+\ell) - b\,(1-\gamma)^2(1+\ell)^2\\
D = \Lambda^2 + 4CF\cdot\,\Gamma\bigl(CF\cdot\,bA^2 - \theta BA c\bigr)\\ 
\end{array}\right.$
\end{flushleft}
3. Set: $x_{\text{liq}} = \min\{x_c, x_b, x_{cf}\}$ \\
4. Compute: $\pi_{\text{liq}} = \frac{B((1-\gamma)(1+\ell)-1)x_{liq} }{A+ x_{liq}(1-\gamma)(1+\ell)}$\\ 
5. \textbf{If} $x_{cf} < \min\{x_c, x_b\}$:
\begin{flushleft}
$\left|\begin{array}{l}
y_{cf} =\frac{Bx_{cf}(1-\gamma)(1+\ell)}{A+x_{cf}(1-\gamma)(1+\ell)} \\
\bar{b} = b - \frac{1}{1+l} y_{cf}\\
\bar{A} = A + x_{cf}(1-\gamma)(1+\ell),\quad \bar{B} = B - y_{cf}\\
x_{\kappa b} = \frac{\kappa \bar{b}\bar{A}}{\bar{B} - \kappa \bar{b}(1-\gamma)(1+\ell)}\\
x^*=\frac{\bar{A}(\sqrt{(1-\gamma)(1+\ell)} - 1)}{(1-\gamma)(1+\ell)}\\ x_{\text{last}} = \min\{x_c-x_{cf}, x_{\kappa b}, x^*\}\\
\pi_{\text{last}} = \frac{\bar{B}x_{last}(1+\ell)(1-\gamma)}{\bar{A}+x_{last}(1-\gamma)(1+\ell)}-\frac{\bar{B}x_{last}}{\bar{A}}
\end{array}\right.$
\end{flushleft}
6. \textbf{Else:} $\pi_{\text{last}} = 0$ \\
7. \textbf{Return:} $\pi_{\text{tot}} = \pi_{\text{liq}} + \pi_{\text{last}}$
\end{algorithm}

\begin{remark}
When calculating the profit with transaction fees at the CPMM, it is possible that the liquidator is taking a loss due to the assessed fees. Notably, the only condition for the derivative of $\pi_{liq}$ to be zero is: $(1-\gamma)(1+\ell)-1=0$.  This means that if $\gamma$ is too high compared to $\ell$, liquidators will never choose to enter the system and bad debt will proliferate.
\end{remark}
\subsection{Full Liquidation Event}\label{sec:optimiz-full}
The closing factor and maximum liquidation rate are parameters set by the lending platform. In order to determine the ultimate optimal profit, the optimal profit $L(CF_i,\kappa_i)$ across all potential threshold pairs $(CF_i, \kappa_i)$ should be evaluated, where $i\in N$ indexes the chosen discretizations of thresholds. That is, the maximum would be taken as $\max_{i\in N}\{L(CF_i, \kappa_i)\}$.
In fact, Aave (and major other LPs) use only two threshold pairs: $(CF,1)$ and $(1,\kappa)$ for $CF,\kappa \in (0,1)$. Thus, for simplicity, herein we consider the maximum of the two outcomes, i.e., $\max\{L(CF, 1),\, L(1, \kappa)\}$ as the realized profit.

Notably, there is a key trade-off between the two strategies. If the integral is stopped at a lower health factor, e.g., when the position crosses $CF$, the liquidator is permitted to carry out a larger, final liquidation event in a single shot. In contrast, extending the liquidation sequence by continuing the integral under the $\kappa$ constraint results in a longer stream of smaller profits. The final liquidation's size is usually capped by $\kappa$, which determines the forced cutoff of the permitted liquidation amount. In other words, the liquidator must decide between a shorter horizon with a larger terminal gain or longer horizon with a capped end-payoff. Throughout the remainder of this section, we will explore these tradeoffs via numerical examples.

First, in Example~\ref{ex:strategies}, we study how price impact and market depth affect liquidation outcomes, demonstrating that deeper pools favor the $L(CF,1)$ liquidation strategy. 
Second, we examine a fee-free setting in Example~\ref{ex:k_mults}, to isolate the role of price dynamics, demonstrating how changes in the DEX price cause regime shifts in the binding liquidation constraint and produce nonlinear profit behavior as the dominant liquidation mechanism shifts. Third, in Example~\ref{ex:nofee_nodelta}, we analyze how liquidation profitability changes across health factor regimes by looking at a fixed debt and collateral configuration and changing the underlying asset price.

\begin{example}\label{ex:strategies}
Consider a CPMM with initial reserves of risky assets $A' = 1{,}000s$ and stablecoins $B' = 2{,}000{,}000s$ for $s > 0$ which charges a fee of $\gamma = 0.3\%$. For simplicity of notation, we take $A_0 = 1{,}000$ and $B_0 = 2{,}000{,}000$ for the remainder of this example; the factor $s > 0$ allows us to scale the DEX depth while maintaining the spot price at $B_0/A_0 = 2{,}000$. Fix the lending position at $b = 10{,}000$ and the liquidation parameters to $\theta = 85\%$, $\ell = 5\%$, $CF = 95\%$, and $\kappa = 50\%$. In order to ensure that liquidations can be triggered, the initial collateral level is selected as $c = \frac{bA_0}{\theta B_0}-0.35$.

Under such a setting, we calculate the liquidation profits under two strategies for each value of $s$: $L(CF,1)$, which represents liquidation that is terminated when the health factor recovers to the closing factor and then completed with a final liquidation, and $L(1,\kappa)$, which permits liquidation to proceed up to the protocol-imposed fraction. 

The $L(CF,1)$ strategy produces higher profit for the majority of pool depths, as shown in Figure~\ref{fig:l_compare}, and its advantage grows as the pool gets deeper. 
Shallow pools, causing high slippage, result in liquidations that rapidly hit one of the binding constraints ($x_c$, $x_{\kappa b}$, or $x_{cf}$); because of this, the two strategies---$L(CF,1)$ and $L(1,\kappa)$---behave similarly and sometimes reverse their ordering.
We highlight that $L(CF,1)$ is non-monotonic in pool depth; this occurs as deeper pools reduce $\pi_{liq}$ as $x_{cf}$ is decreasing in pool depth (the decreasing price impacts reduce the length during which the infinitesimal liquidations can occur) while $\pi_{last}$ is increasing.
This explains the divergence seen in the figure and shows how market depth controls the relative optimality of liquidation strategies.

\begin{figure}[!tb]
    \centering
    \includegraphics[scale=0.6]{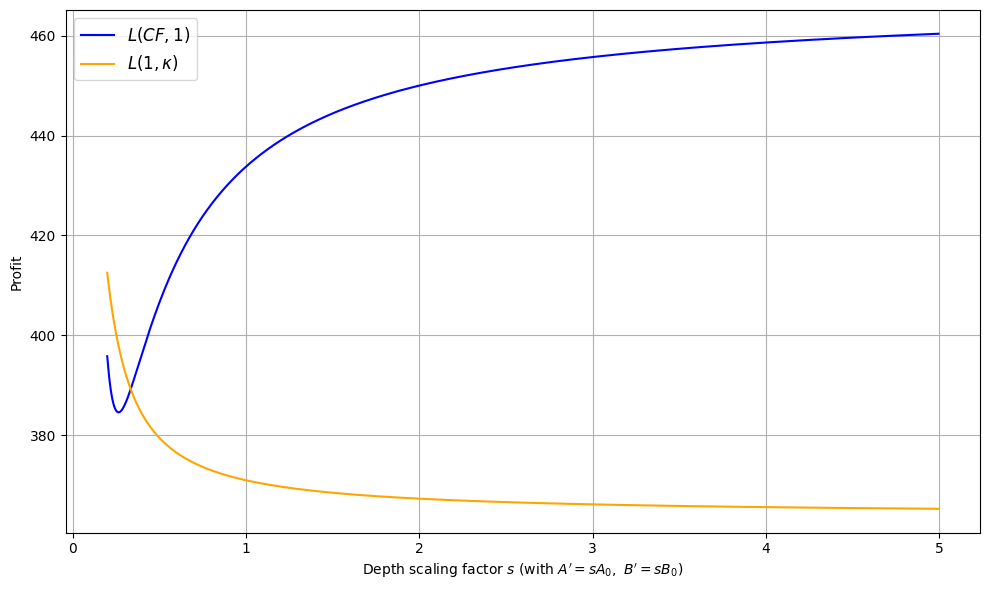}
    \caption{Example~\ref{ex:strategies}: Comparison of liquidation profit under two liquidation strategies.  The x-axis shows the depth of the pool and the y-axis indicates the total profit of the strategy.}
    \label{fig:l_compare}
\end{figure}
\end{example} 

\begin{example}\label{ex:k_mults}
To focus solely on the liquidation problem without considering details of the DEX, we consider the CPMM without any transaction fees, i.e., $\gamma = 0$. In doing this, our goal is to example the relationship between liquidation profits and the asset prices as the initial health factor is varied.

Consider the CPMM with spot price $p > 0$ and liquidity $A_0B_0 = K = 2 \times 10^9$, i.e., $A_0 = \sqrt{K/p}$ and $B_0 = \sqrt{Kp}$. Consider, the lending pool with fixed parameters $\theta = 85\%$, $\ell = 5\%$, $CF = 80\%$ and $\kappa = 50\%$. For an initial health factor ranging from $HF \in \{50\%,90\%,92\%,94\%,99\%\}$, we consider $b = 10,000$ and $c = HF\cdot\frac{bA_0}{\theta B_0}$. Herein we have found that the system behavior is qualitatively comparable for all health factors below $90\%$, as such we consider $HF = 50\%$ as a representative setting.
\begin{figure}[!tb]
    \centering
    \includegraphics[scale=0.6]{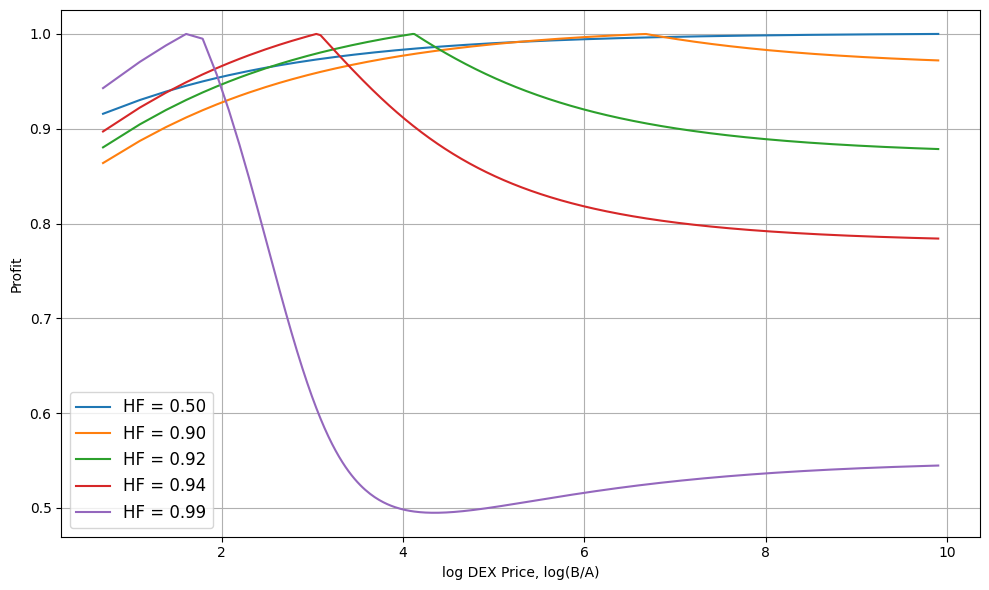}
    \caption{Example~\ref{ex:k_mults}: Liquidation profit for different health factor levels}
    \label{fig:k_mults}
\end{figure}
As seen in Figure~\ref{fig:k_mults}, the profit curves exhibit distinct breakpoints when the variable limiting the amount of liquidations changes. Recall that the available collateral in the lending platform, represented by $x_c$, limits the amount that can be liquidated when the DEX price is low. As the price grows, the requirement to return the health factor to 1 (or to the closing factor $CF$ when $HF < CF$) becomes binding for $x_{\text{liq}}$; that is, the collateral becomes adequate to support deeper liquidations. As a result, the profit function's slope changes as it enters this new regime. At even higher prices, there is a second, more subtle transition; this is particularly noticeable in the $HF = 99\%$ case.  
The $\pi_{\text{last}}$ term captures this second transition, as, again, the liquidator moves from running through all available collateral leaving bad debt for the lending pool to closing out the maximum allowable debt. The decreasing profitability as prices increase results from the decreasing first-stage profits ($\pi_{\text{liq}}$). In fact, even though the final liquidation amount $x_{\text{last}}$ is declining, $\pi_{\text{last}}$ can actually start to rise with the larger initial price. The nonlinearity of $\pi_{\text{last}}$ is the cause of this counterintuitive behavior: each unit becomes more profitable as less is being liquidated meaning the increase in per unit profit exceeds the decreasing liquidation size. Also, eventually, $\pi_{\text{last}}$ starts to grow more quickly than $\pi_{\text{liq}}$ decrease. Consequently, the total profit $\pi_{\text{tot}}$ starts to increase once more for high enough prices.
\end{example}

\begin{example} \label{ex:nofee_nodelta}
In contrast to Example~\ref{ex:k_mults}, we now explore what happens when the health factor is driven by the initial price at the CPMM. Consider the lending pool with debt $b = 10,000$ and collateral $c = 6$. The liquidations are driven by parameters $\theta = 85\%$, $\ell = 5\%$, $CF = 80\%$, and $\kappa = 50\%$. Herein, we consider the CPMM with a fixed liquidity $K = 2 \times 10^9$ with $A_0 = \sqrt{K/p}$ and $B_0 = \sqrt{Kp}$ for $p = HF\cdot\frac{b}{\theta c}$. 
As can be seen in Figure~\ref{fig:nofee_nodelta}, the liquidation profit rises as prices do. The health factor declines when prices fall too low, leaving the lending platform with bad debt that is unrecoverable. Until the price hits 1756.76, the binding constraint on liquidation stays at $x_c$. After that, it changes to $x_{cf}$. The reason for the steep decline in profit around $HF = 1$ is that liquidation is prohibited once the health factor hits 1, which results in a zero profit.
\begin{figure}[!tb]
    \centering
    \includegraphics[scale=0.6]{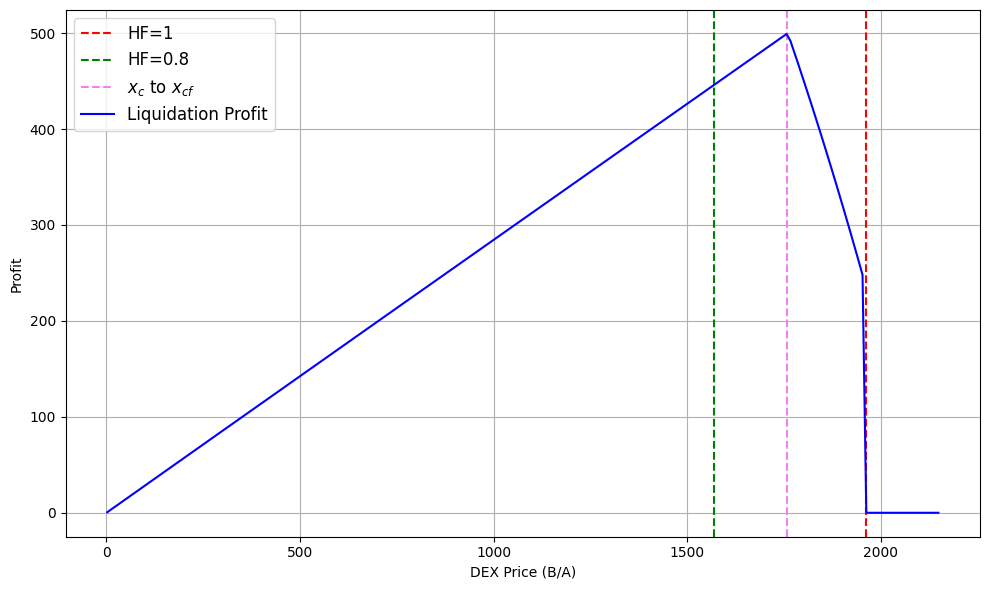}
    \caption{Example~\ref{ex:nofee_nodelta}: Liquidation profit without sandwiching and transaction fees}
    \label{fig:nofee_nodelta}
\end{figure}
\end{example}
\section{Sandwich Attack of a Liquidation Event}\label{sec:front}
Following the common practice in which liquidations are often executed through MEV relays and private transaction bundles, within this section we assume that the liquidator can influence transaction ordering within a block, i.e., get their transaction posted before anyone else tries. This liquidator's goal is to push the collateral price downward to reduce the health factor in order to trigger liquidations, thus, increase profitability. Specifically, within a single block, the liquidator can insert a sequence of trades that: (1) drop the price of the collateral at the DEX; (2) complete the liquidations; and (3) reverse the initial selling pressure at this further depressed price to recover the initial holdings. Notably, such a MEV attack is possible because the liquidator can control their transaction orders and the DEX price can be deterministically predicted within the block.

Even if the initial health factor is above 1, the liquidator can strategically manipulate the price by selling collateral units into the DEX. 
This sale depresses the collateral price and, if sufficiently large, can lower the health factor below the critical threshold to trigger liquidations on the lending platform. 
Mathematically, and as highlighted in the simplified timeline in Figure~\ref{fig:three_steps}, assuming the initial DEX liquidity is $(A_0,B_0)$, the attacker can sell $\Delta$ units of collateral to the DEX to update the DEX reserves ($A_1 > A_0$ and $B_1 < B_0$ in the collateral and debt assets, respectively). 
Once the liquidation spiral unfolds (which causes the DEX to have updated collateral and debt asset reserves $A_2 > A_1$ and $B_2 < B_1$), the liquidator can repurchase the previously sold collateral amount $\Delta$ at the new, lower price. This sandwich attack on the liquidation event results in profits in excess of the liquidation event itself as it effectively allows the liquidator to sell $\Delta$ high and buy it back low. 
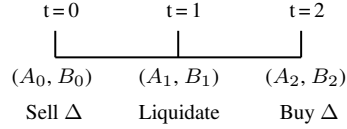
\begin{figure}[!tb]
\centering
\tikzset{every picture/.style={line width=0.75pt}} 
\begin{tikzpicture}[x=0.75pt,y=0.75pt,yscale=-1,xscale=1]
\draw    (124.4,111) -- (248.4,111) ;
\draw    (124.4,97.6) -- (124.4,111) ;
\draw    (248.4,97.6) -- (248.4,111) ;
\draw    (186.4,97.6) -- (186.4,111) ;
\draw (117,82) node [anchor=north west][inner sep=0.75pt]   [align=left] {{\scriptsize t\,=\,0}};
\draw (179,82) node [anchor=north west][inner sep=0.75pt]   [align=left] {{\scriptsize t\,=\,1}};
\draw (241,82) node [anchor=north west][inner sep=0.75pt]   [align=left] {{\scriptsize t\,=\,2}};
\draw (101,115) node [anchor=north west][inner sep=0.75pt]   [align=left] {{\scriptsize ($A_0$, $B_0$)}\\{\scriptsize \,\,\,\,Sell $\Delta $}};
\draw (165,115) node [anchor=north west][inner sep=0.75pt]   [align=left] {{\scriptsize ($A_1$, $B_1$)}\\{\scriptsize Liquidate}};
\draw (229,115) node [anchor=north west][inner sep=0.75pt]   [align=left] {{\scriptsize ($A_2$, $B_2$)}\\{\scriptsize \,\,\,\,Buy $\Delta $}};
\end{tikzpicture}
\caption{\footnotesize{Sell, liquidate and buy back events}}
\label{fig:three_steps}
\end{figure}
\begin{remark}
Within the discussion herein, we ignore where the liquidator obtains the initial $\Delta$ collateral to be sold at the DEX to initiate the sandwich attack. Due to the requirement to close out the sandwich attack after the liquidation event is completed, this initial liquidity could be obtained from, e.g., a flash loan rather than requiring the attacker to have the necessary liquidity on their own.
\end{remark}
Before considering the profits that can be obtained from this sandwich attack, we wish to consider the exact reserves of the CPMM at each stage of the attack. Recall, we assume that the initial CPMM reserves are $(A_0,B_0)$ and the attack is of size $\Delta \geq 0$.
Following the constant product rule, after the initiation of the attack, the DEX reserves update to:
\begin{align*}
A_1(\Delta) &= A_0 + \Delta(1-\gamma) \\
B_1(\Delta) &= \frac{A_0 B_0}{A_0 + \Delta (1-\gamma)}
\end{align*}
so that $A_1(\Delta) B_1(\Delta) = A_0 B_0$.
With these updated reserves, the liquidation process, following the optimal structure discussed in Section~\ref{sec:optimiz}, will commence. Herein, we will simplify the discussion to assume that this process results in $x \geq 0$ of collateral being claimed by the attacking liquidator (not including the $\ell$ bonus paid for completing the liquidation). This liquidation leads to the further, updated, DEX reserves:
\begin{align*}
A_2(\Delta,x) &= A_1(\Delta) + x (1+\ell)(1-\gamma) = A_0 + \Delta (1-\gamma) + x (1+\ell)(1-\gamma)\\
B_2(\Delta,x) &= \frac{A_1(\Delta) B_1(\Delta)}{A_1(\Delta) + x (1+\ell)(1-\gamma)} = \frac{A_0 B_0}{A_0 + \Delta (1-\gamma) + x (1+\ell)(1-\gamma)}
\end{align*}
so that $A_2(\Delta,x) B_2(\Delta,x) = A_1(\Delta) B_1(\Delta) = A_0 B_0$.

In order to determine the true profitability of this full sandwich attack, we need to determine the cost of repurchasing $\Delta$ after the liquidation event has concluded. To ensure that the attacker is repurchasing exactly $\Delta$, and following the CPMM construction, they must pay $\frac{B_2(\Delta,x)\Delta}{(1-\gamma)(A_2(\Delta,x)-\Delta)}$.
In this way, we can write the MEV-based optimization problem:
\begin{align}
\label{eq:MEV-opt} &\max_{\Delta\geq 0}\bigg[\frac{B_0\Delta(1-\gamma)}{A_0+\Delta(1-\gamma)} + V(c,b,A_1(\Delta),B_1(\Delta)) - \frac{B_2(\Delta,x)\Delta}{(1-\gamma)(A_2(\Delta,x)-\Delta)}\bigg]
\end{align}
where $V$ is as in~\eqref{eq:V} while the difference between the first and last terms correspond to the profits purely from the sandwich attack on the liquidation event. Recall that $x$ denotes the optimal amount of collateral claimed (not including the $\ell$ bonus) as part of the liquidation procedure; in this way, implicitly, $x$ is also a function of $\Delta$ within this optimal MEV attack problem.
Notably, this maximization takes into consideration the effect on future liquidation opportunities when determining the ideal amount $\Delta$ to trade in the DEX for MEV profits.
\begin{remark}
While~\eqref{eq:MEV-opt} is written to optimize over $\Delta \geq 0$, we may wish to consider upper and lower bounds on $\Delta$.
First, the MEV attacker implicitly needs to make the initial liquidation large enough to push the health factor of the loan below 1 to trigger any possibility of profits, i.e.,
\[\Delta \geq \max\left\{0 \; , \;  \frac{\sqrt{\frac{\theta c A_0 B_0}{b}}-A_0}{1-\gamma}\right\}.\]
Second, for system robustness, we may impose a constraint so that the DEX holds enough reserves $B_1(\Delta)$ to cover the debt $b$ at the LP, i.e.,
\[\Delta\leq \frac{A_0B_0}{b(1-\gamma)^2(1+\ell)}-\frac{A_0}{1-\gamma}.\]
If $\Delta$ exceeds this cap, then any debt that cannot be liquidated would count as \emph{bad debt} which results in losses for the lenders.\footnote{Implicitly, we are assuming the CPMM is the only market at which the collateral can be sold.}
\end{remark}

\begin{example}\label{ex:front_nofee}
Herein we wish to explore a simple example in which the LP has an initial health factor greater than 1. In this way, without the MEV attack, no liquidation event would be triggered. Thus, to make a profit, the liquidator will need to lower the market price of the collateral by selling some of it into the CPMM in order to start the liquidation spiral. All system parameters are as in Example~\ref{ex:nofee_nodelta}.

Rather than directly solving~\eqref{eq:MEV-opt}, in Figure~\ref{fig:front_nofee}, we plot the profits earned by the liquidators as a function of the attack size $\Delta$.
Notably, the profits are initially zero when $\Delta$ is low because the health factor remains above 1 and no liquidations are triggered. However, as the attack size grows, profits jump up at the moment that the attack drives the health factor to 1 as seen at the dashed red line in the plot. From this point, the total profits earned by the attacker are monotonically increasing as a function of the attack size $\Delta$. Though the total profits continue to grow, the profitability of the liquidation itself begins to drop as a function of the attack size as the liquidation amounts moves from $x_{cf}$ and $x_{\text{last}}$ to $x_c$, i.e., where all collateral is liquidated so the lower prices merely decrease the profits earned from selling that collateral.

We wish to note that the non-monotonicity of the liquidation profits correspond to what was observed in Figure~\ref{fig:nofee_nodelta} (Example~\ref{ex:nofee_nodelta}). In that example, we considered how liquidation profits were a function of the initial price; herein, that initial price is a decreasing function of $\Delta$. Therefore, the same profit dynamics as in Figure~\ref{fig:nofee_nodelta} can be seen if we read Figure~\ref{fig:front_nofee} from right to left.
\begin{figure}[!tb]
    \centering
    \includegraphics[scale=0.6]{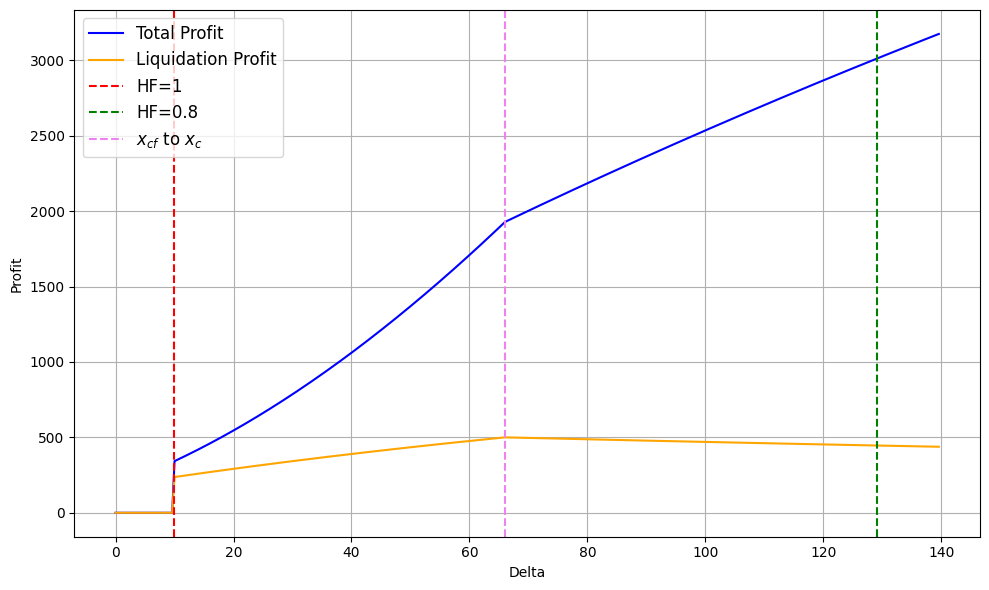}
    \caption{Example~\ref{ex:front_nofee}: Profit with sandwiching and no transaction fees}
    \label{fig:front_nofee}
\end{figure}
\end{example}
\subsection{Limiting Behavior}\label{sec:front-limit}
As seen in Example~\ref{ex:front_nofee}, the liquidator's profits are strictly increasing in the size of the attack. With this intuition, it is tempting to think that the optimal behavior~\eqref{eq:MEV-opt} is always to liquidate as much as possible. 
Within this section, we will investigate the result of that maximally sized attack $\Delta_{max}$ to demonstrate that this is not only wrong, but the worst possible situation when the DEX has a non-zero fee rate $\gamma > 0$.

First, in order to determine the profitability under the maximally sized attack $\Delta_{max}$ we need to determine its size. Specifically, when attempting to close the attack (e.g., to repay the flash loan used to initiate the sandwich attack so as to not revert the transaction), the DEX needs to have sufficient liquidity in order to allow for this purchase, i.e., $A_2(\Delta,x) \geq \Delta$. Because of the fees $\gamma$ charged at the DEX, in Proposition~\ref{pr:delta_max} below, we find the maximal feasible attack size. Note that the attack size $\Delta = A_2(\Delta,x)$ depends on the realized liquidation event $x$; because we are following the situation in which the sandwich attack is of significant size, we simplify to consider $x = x_c$, i.e., all collateral is liquidated in the event.

\begin{proposition}\label{pr:delta_max}
An attack $\Delta \geq 0$ is feasible without reversion (i.e., $A_2(\Delta,x_c) \geq \Delta$) if and only if $\Delta \leq \Delta_{max}$ where
\[\Delta_{max} = \begin{cases} +\infty & \text{if } \gamma = 0 \\ \frac{A_0+(1-\gamma)c}{\gamma}, & \text{if } \gamma > 0. \end{cases}\]
\end{proposition}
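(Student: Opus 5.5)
We plan to substitute $x = x_c = \frac{c}{1+\ell}$ directly into the explicit formula for $A_2(\Delta,x)$ given earlier in this section. The feasibility condition $A_2(\Delta,x_c)\geq \Delta$ then becomes a linear inequality in $\Delta$, which we will solve in the two cases $\gamma = 0$ and $\gamma > 0$.

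First, since $x_c(1+\ell) = c$, we get
\[A_2(\Delta,x_c) = A_0 + \Delta(1-\gamma) + c(1-\gamma).\]
The condition $A_2(\Delta,x_c)\ge\Delta$ is therefore equivalent to
\[A_0 + (1-\gamma)c \;\geq\; \Delta - \Delta(1-\gamma) = \gamma\Delta.\]

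Second, we split on $\gamma$. If $\gamma = 0$, the right-hand side vanishes, and the inequality reads $A_0 + c \geq 0$. This holds for every $\Delta\ge 0$ because reserves and collateral are nonnegative (indeed $A_0>0$). Hence every attack is feasible and $\Delta_{max} = +\infty$. If $\gamma>0$, we may divide by $\gamma$ without flipping the inequality, which gives $\Delta \le \frac{A_0+(1-\gamma)c}{\gamma}$. Both directions of the ``if and only if'' follow because every step is an equivalence.

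There is no real obstacle here. The only points needing care are sign conditions. We implicitly use $\gamma\in[0,1)$, so that $1-\gamma>0$ and the bound is positive, and the nonnegativity of $A_0$ and $c$ in the $\gamma=0$ case. We should also remark that fixing $x = x_c$ is the simplifying assumption stated just before the proposition, not something we derive.
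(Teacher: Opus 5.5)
Your proof is correct and follows the paper's argument essentially line for line: you substitute $x_c(1+\ell)=c$ to get $A_2(\Delta,x_c)=A_0+(1-\gamma)\Delta+(1-\gamma)c$, reduce feasibility to $A_0+(1-\gamma)c\ge\gamma\Delta$, and split on $\gamma=0$ versus $\gamma>0$. Your explicit remarks on the sign conditions ($\gamma\in[0,1)$, $A_0,c\ge 0$), and on $x=x_c$ being an imposed assumption rather than a derived fact, are slightly more careful than the paper's version.
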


\begin{proof}
The first swap in the attack increases the DEX's collateral by $(1-\gamma)\Delta$ units. In the large attack regime under consideration, the liquidation contributes an extra $(1-\gamma)x_c(1+\ell)=(1-\gamma)c$ units of collateral to the DEX after exhausting the borrower's collateral constraint, i.e., $x=x_c=\frac{c}{1+\ell}$. Hence the post-liquidation collateral reserve is $$A_2(\Delta,x_c)=A_0+(1-\gamma)\Delta+(1-\gamma)c\,.$$
The ability of the attacker to repurchase $\Delta$ units of collateral, i.e., $A_2(\Delta,x_c)\ge \Delta$ or equivalently
$$A_0+(1-\gamma)\Delta+(1-\gamma)c \;\ge\; \Delta \quad\leftrightarrow\quad A_0+(1-\gamma)c \;\ge\; \gamma \Delta
$$
is necessary for the attack to be feasible.
This results in $\Delta \le \frac{A_0+(1-\gamma)c}{\gamma}$ if $\gamma>0$. The inequality reduces to $A_0+c\ge 0$ if $\gamma=0$, which holds trivially for $A_0,c\ge 0$, i.e., $\Delta$ is unbounded.
\end{proof}

\begin{remark}
In Proposition~\ref{pr:delta_max}, we implicitly assume $x = x_c = c/(1+\ell)$ from the optimal liquidation after the attack $\Delta$ has commenced. Intuitively, in the large-attack regime in which the price is sufficiently low, the health factor constraint ($x_{cf}$) becomes exceedingly large as recovering the health factor of 1 becomes a near impossibility. Similarly, when the collateral price is low, even ignoring price impacts, the ability to repay the loan $b$ requires a large amount of collateral ($x_b$). This leaves the only remaining constraint to be that based on the total amount of collateral in the LP ($x_c$).
\end{remark}

With this maximal attack size, the goal is to determine the profits that result from such an attack. This profit \emph{and loss} is provided in Proposition~\ref{pr:limit}. Notably, while the case in which the DEX charges no fees ($\gamma = 0$) behaves as seen in Example~\ref{ex:front_nofee} (i.e., larger attacks lead to larger profits), a large attack can generate infinite losses for the attacker when DEX fees are introduced ($\gamma > 0$).

\begin{proposition}\label{pr:limit}
The limiting profitability when $\Delta \nearrow \Delta_{max}$ is provided by:
\begin{itemize}
    \item If $\gamma = 0$: The liquidation value of the collateral $c$, i.e., $\frac{B_0 c}{A_0 + c}$.
    \item If $\gamma > 0$: Infinite losses, i.e., $-\infty$.
\end{itemize}
\end{proposition}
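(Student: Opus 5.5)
Our plan is to evaluate the objective in~\eqref{eq:MEV-opt} term by term along $\Delta \nearrow \Delta_{max}$. Following the convention of Proposition~\ref{pr:delta_max} and the subsequent remark, we take the liquidation to be $x = x_c = c/(1+\ell)$ for large attacks, so that $A_2(\Delta,x_c) = A_0 + (1-\gamma)(\Delta + c)$ and $B_2 = A_0B_0/A_2$. The profit then splits into three pieces:
\begin{enumerate}[label=(\roman*)]
\item the proceeds of the initial sale, $\frac{B_0\Delta(1-\gamma)}{A_0+\Delta(1-\gamma)}$;
\item the liquidation value $V(c,b,A_1(\Delta),B_1(\Delta))$;
\item the repurchase cost $\frac{B_2\Delta}{(1-\gamma)(A_2-\Delta)}$.
\end{enumerate}
We compute the limit of each piece and then sum.

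\textbf{Case $\gamma=0$.} Here $\Delta_{max}=+\infty$. With $x=x_c$, the liquidation term is the immediate payout $\frac{B_1 c}{A_1 + c} - \frac{B_1 c}{(1+\ell)A_1}$. We combine the first part with the sale and repurchase terms. The sale gives $\frac{B_0\Delta}{A_0+\Delta}$. Since $A_2 - \Delta = A_0 + c$, the repurchase costs $\frac{A_0B_0\Delta}{(A_0+\Delta+c)(A_0+c)}$. The DEX proceeds from selling the collateral are $\frac{B_1 c}{A_1+c} = \frac{A_0B_0 c}{(A_0+\Delta)(A_0+\Delta+c)}$. A direct algebraic simplification should show that
\[
\text{sale} + \text{collateral proceeds} - \text{repurchase} \;=\; B_0 - \frac{A_0B_0}{A_0+\Delta+c} - \frac{A_0B_0\Delta}{(A_0+\Delta+c)(A_0+c)}.
\]
This is just the telescoping of the reserve path $B_0 \to B_1 \to B_2 \to B_3$. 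As $\Delta \to \infty$ this tends to $B_0 - \frac{A_0B_0}{A_0+c} = \frac{B_0 c}{A_0+c}$. The remaining term, the repayment $\frac{B_1 c}{(1+\ell)A_1} = \frac{A_0B_0 c}{(1+\ell)(A_0+\Delta)^2}$, tends to $0$. This gives the claimed limit $\frac{B_0c}{A_0+c}$.

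\textbf{Case $\gamma>0$.} The key observation is that $A_2(\Delta,x_c) - \Delta = A_0 + (1-\gamma)c - \gamma\Delta \searrow 0$ as $\Delta \nearrow \Delta_{max}$. Meanwhile $B_2 \to A_0B_0/A_2(\Delta_{max}) > 0$ and $\Delta \to \Delta_{max}$ is finite and positive, so the repurchase cost (iii) diverges to $+\infty$. It then suffices to bound the other terms. The sale proceeds (i) are at most $B_0$. The liquidation value (ii) is bounded above by the total stablecoin reserve $B_1 \le B_0$, because every payout term in~\eqref{eq:V} is a difference in which the DEX proceeds never exceed the reserve. Hence the total profit is at most $2B_0$ minus a quantity tending to $+\infty$, so it tends to $-\infty$.

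We expect the main obstacle to be the $\gamma=0$ algebra: the telescoping identity has to be verified carefully, and we must justify that the optimal liquidation in $V$ is indeed a single pass with $x_c$ binding. The latter amounts to checking that the integral form~\eqref{eq:pi_liq} in Lemma~\ref{prop:dp} coincides, at $\gamma=0$, with the lump-sum expression up to the vanishing repayment term. If that fails, we would instead bound $V$ between the lump-sum payout and $\frac{B_1 c}{A_1+c}$, both of which share the same limit once combined with the other terms. The $\gamma>0$ case needs only the crude upper bounds above.
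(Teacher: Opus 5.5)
Your proposal is correct and follows the paper's proof. For $\gamma>0$ it is the same argument: the repurchase cost diverges because $A_2(\Delta,x_c)-\Delta\to 0$, while the remaining terms are bounded above, which you justify more explicitly than the paper. For $\gamma=0$ the paper takes the limits $B_0$, $0$ and $\frac{A_0B_0}{A_0+c}$ of the three terms separately. Your telescoping identity is a regrouping of that computation, and it in fact holds exactly for every $\Delta$, not only in the limit.

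The obstacle you anticipate does not arise. You never need to identify $V$ with the lump-sum payout at $x_c$; by Lemma~\ref{prop:dp} the optimum is the incremental liquidation, which differs only in the vanishing repayment term. The bound $0\le V(c,b,A_1(\Delta),B_1(\Delta))\le B_1(\Delta)=\frac{A_0B_0}{A_0+\Delta}\to 0$ already suffices, and this is exactly the paper's argument.
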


\begin{proof}
In~\eqref{eq:MEV-opt}, we examine the limiting profit as $\Delta \nearrow \Delta_{\max}$, differentiating between $\gamma=0$ and $\gamma>0$.

\noindent\textbf{Case $\gamma=0$.}
When $\gamma=0$, we have $\Delta_{\max}=+\infty$ per Proposition~\ref{pr:delta_max}. The first term in \eqref{eq:MEV-opt} satisfies
$$\lim_{\Delta\to\infty}\frac{B_0\Delta}{A_0+\Delta}=B_0.$$
The continuation value fades since $B_1(\Delta)=\frac{A_0B_0}{A_0+\Delta}\to 0$, implying
$\lim_{\Delta\to\infty}V(c,b,A_1(\Delta),B_1(\Delta))=0.$
The final term converges to
$$\lim_{\Delta\to\infty}\frac{\left(\frac{A_0B_0}{A_0+\Delta+x(1+\ell)}\right)\Delta}{A_0+x(1+\ell)}=\frac{A_0B_0}{A_0+c},$$
where we use $x=x_c=\frac{c}{1+\ell}$. Thus, $\lim_{\Delta\to\infty}\pi(\Delta)=\frac{B_0c}{A_0+c}.$

\noindent\textbf{Case $\gamma>0$.}
When $\gamma>0$, Proposition~\ref{pr:delta_max} implies
\[
\Delta_{\max}=\frac{A_0+(1-\gamma)c}{\gamma}.
\]
As $\Delta\nearrow\Delta_{\max}$, we have $A_2(\Delta,x_c)-\Delta\to 0$, and therefore
\[
\frac{B_2(\Delta,x)\Delta}{(1-\gamma)(A_2(\Delta,x)-\Delta)}\to+\infty.
\]
Since this term enters the objective with a negative sign (while all other terms are bounded), the total profit diverges to $-\infty$.
\end{proof}

\subsection{Fees as Oracle Protection}\label{sec:front-fees}
It follows from Proposition~\ref{pr:limit} that the attacker may need to restrict the attack's scope rather than always carrying out a naive, large attack. Within this section, we consider a numerical example to demonstrate how fees of the DEX oracle, can prevent profitable manipulations to the spot price. 

\begin{example}\label{ex:front-fees}
Consider a joint CPMM and LP system. Herein, we take the CPMM with initial reserves $A_0 = 10{,}000$ and $B_0 = 28{,}000{,}000$ in the collateral asset and stablecoin, respectively. In addition, to match the fees charged at Uniswap v2, we consider $\gamma = 0.3\%$. Consider the LP with $\theta = 85\%$, $\ell = 5\%$, $CF = 80\%$ and $\kappa = 50\%$. Finally, consider the situation in which $b = 32{,}000$ stablecoins were borrowed with $c = 20.12$ of collateral. These values were selected to reflect realistic pool depths and borrower positions.

The profitability of an attack, as a function of the attack size $\Delta$, is provided in Figure~\ref{fig:fee30_delta}. Similar to the earlier examples, we consider both the total profits and loss as well as the profits that are derived directly from the liquidation event. Notably, these transaction fees clearly change the strategy's profitability; for instance, without fees ($\gamma = 0$), the total profits would be non-negative and non-decreasing as in, e.g., Figure~\ref{fig:front_nofee}. Importantly, the overall profit is negative for all positive values of $\Delta > 0$ even though the liquidation profits still reflect the same regime changes discussed in Example~\ref{ex:front_nofee}; that is, once the health factor falls below 1, the liquidation component jumps to a positive value. This effect, in which the attacker takes losses even with the jump in profits at the health factor of 1, deters any such attack.
\begin{figure}[!tb]
    \centering
    \includegraphics[scale=0.6]{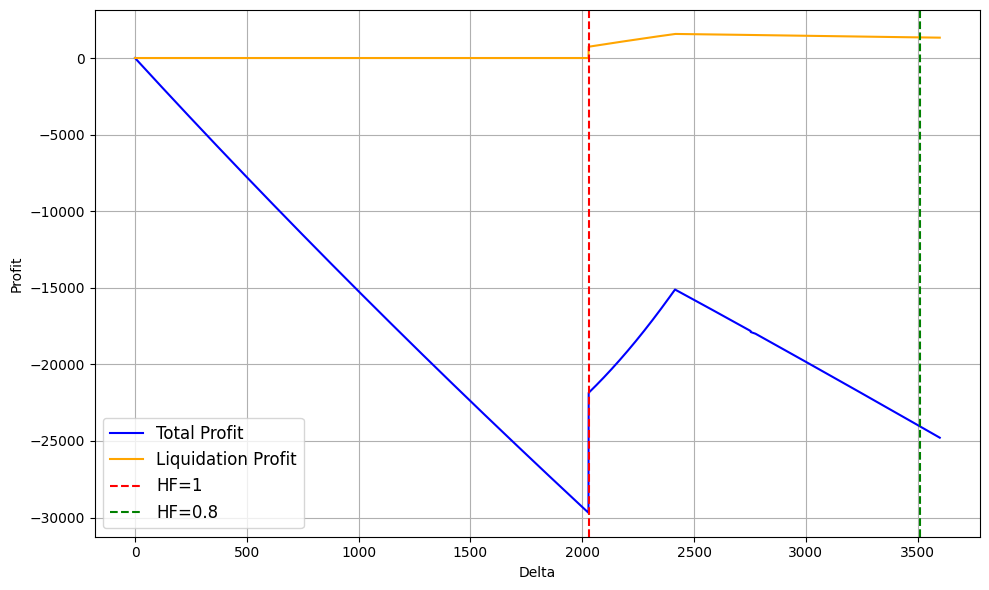}
    \caption{Example~\ref{ex:front-fees}: Profit with 30 bps transaction fees and sandwiching}
    \label{fig:fee30_delta}
\end{figure}
Consider the case in which we vary the CPMM fees $\gamma$. As expected, and shown in Figure~\ref{fig:fee_star}, the total profits extracted by the attack are decreasing in the fees charged. Though a small fee (e.g., $\gamma = 0.1\%$) may still present a profitable sandwiching opportunity for $\Delta$ within a bounded range, raising this fee to $\gamma > \gamma^* := 0.17\%$ completely eliminates the profitability. This demonstrates how transaction fees work as a deterrent to sandwiching because those costs eliminate the economic incentives of the DEX oracle manipulation. That is, the DEX fees serve as a mechanism which promote market stability.
\begin{figure}[!tb]
    \centering
    \includegraphics[scale=0.6]{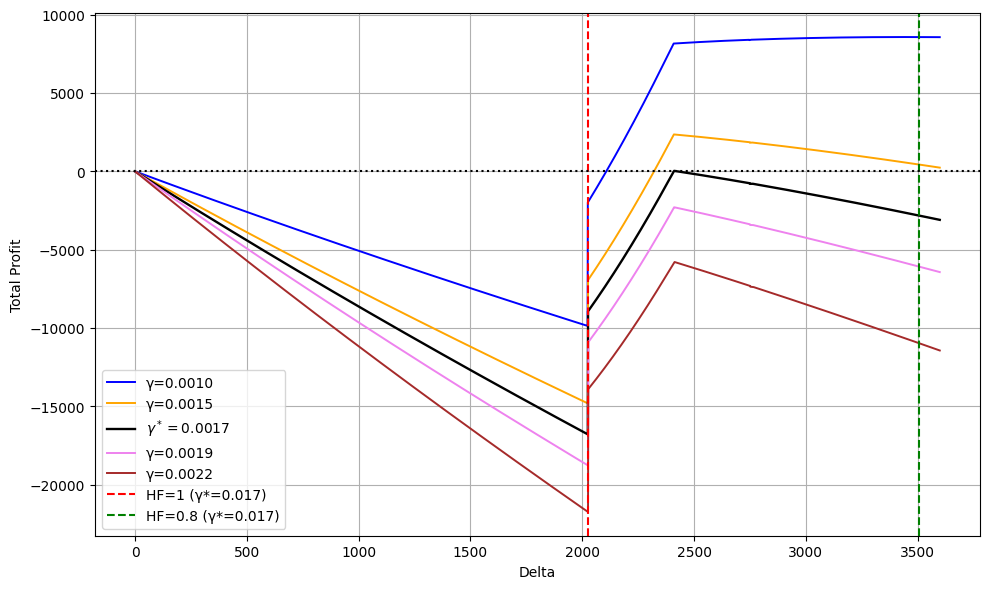}
    \caption{Example~\ref{ex:front-fees}: Total profit as a function of liquidation size $\Delta$ for different fee levels $\gamma$. The black curve corresponds to the critical threshold $\gamma^*=17$ bps.}
    \label{fig:fee_star}
\end{figure}
\end{example}
\section{Discussion}\label{sec:conc}
In order to prevent short-term price manipulation, lending protocols frequently use hardened oracle mechanisms, such as time-weighted averages (TWAPs), medianized prices, or off-chain price feeds. These methods add latency even though they strengthen resistance against transient attacks. Oracle updates that are delayed during sharp market declines may underestimate risk and delay required liquidations, which raises the possibility of bad debt accumulation. In these systems, responsiveness is sacrificed for robustness.

Our findings show that this trade-off is not inevitable. By removing the latency that comes with averaging or off-chain methods, we demonstrate that AMM spot prices can function as practical, fully on-chain oracles. Lending protocols can achieve immediate responsiveness to market movements while preserving solvency during steep price declines by exclusively depending on intra-block DEX prices. This method maintains the blockchain environment's transparency while eliminating reliance on outside data sources.

Importantly, we prove that transaction fees enable this design in two ways. Fees essentially change the incentives for manipulation in addition to rewarding liquidity providers and increasing market liquidity. Unlike previous works (e.g., \cite{cohen23}) that assume frictionless AMMs, we demonstrate that the profitability of oracle manipulation attacks can be completely eliminated by setting DEX fees high enough. Fees act as a ``win-win" design lever in this way, rewarding liquidity provision while fortifying the oracle against predatory MEV tactics. Robustness arises naturally from appropriately aligned market incentives rather than needing external protections.

Crucially, our results serve as a cautious standard for oracle security. The model assumes access to free flash loans and abstracts away operational frictions like gas prices. We give an upper bound on the profitability of a manipulative attacker by disregarding these costs. The robustness of fully on-chain AMM oracles is probably stronger than our theoretical estimates indicate, as the inclusion of gas fees and execution frictions would further compress manipulation incentives in practice.
\bibliographystyle{apalike}
\addcontentsline{toc}{chapter}{References}
{\footnotesize
\bibliography{bibb}}

@TechReport{cornelli24,
    author={Giulio Cornelli and Leonardo Gambacorta and Rodney Garratt and Alessio Reghezza},
    title={{Why DeFi lending? Evidence from Aave V2}},
    year=2024,
    month=May,
    institution={Bank for International Settlements},
    type={BIS Working Papers},
    url={https://ideas.repec.org/p/bis/biswps/1183.html},
    number={1183},
}

@article{zach17,
    title = {Financial contagion and asset liquidation strategies},
    journal = {Operations Research Letters},
    volume = {45},
    number = {2},
    pages = {109-114},
    year = {2017},
    issn = {0167-6377},
    doi = {https://doi.org/10.1016/j.orl.2017.01.004},
    url = {https://www.sciencedirect.com/science/article/pii/S0167637717300329},
    author = {Zachary Feinstein}
}

@inproceedings{qin21,
    author = {Qin, Kaihua and Zhou, Liyi and Livshits, Benjamin and Gervais, Arthur},
    title = {Attacking the DeFi Ecosystem with Flash Loans for Fun and Profit},
    year = {2021},
    isbn = {978-3-662-64321-1},
    publisher = {Springer-Verlag},
    address = {Berlin, Heidelberg},
    url = {https://doi.org/10.1007/978-3-662-64322-8_1},
    doi = {10.1007/978-3-662-64322-8_1},
    pages = {3–32},
    numpages = {30}
}

@article{clark20,
    author = {Clark, Jeremy and Demirag, Didem and Moosavi, Seyedehmahsa},
    title = {Demystifying Stablecoins: Cryptography meets monetary policy},
    year = {2020},
    issue_date = {January-February 2020},
    publisher = {Association for Computing Machinery},
    address = {New York, NY, USA},
    volume = {18},
    number = {1},
    issn = {1542-7730},
    url = {https://doi.org/10.1145/3387945.3388781},
    doi = {10.1145/3387945.3388781},
    journal = {Queue},
    month = mar,
    pages = {39–60},
    numpages = {22}
}

@inproceedings{qin21_1,
   series={IMC ’21},
   title={An empirical study of DeFi liquidations: incentives, risks, and instabilities},
   url={http://dx.doi.org/10.1145/3487552.3487811},
   DOI={10.1145/3487552.3487811},
   booktitle={Proceedings of the 21st ACM Internet Measurement Conference},
   publisher={ACM},
   author={Qin, Kaihua and Zhou, Liyi and Gamito, Pablo and Jovanovic, Philipp and Gervais, Arthur},
   year={2021},
   month=nov, pages={336–350},
   collection={IMC ’21} 
}

@misc{qin23,
      author = {Kaihua Qin and Jens Ernstberger and Liyi Zhou and Philipp Jovanovic and Arthur Gervais},
      title = {Mitigating Decentralized Finance Liquidations with Reversible Call Options},
      howpublished = {Cryptology {ePrint} Archive, Paper 2023/254},
      year = {2023},
      url = {https://eprint.iacr.org/2023/254}
}

@inbook{perez21,
   title={Liquidations: DeFi on a Knife-Edge},
   ISBN={9783662643310},
   ISSN={1611-3349},
   url={http://dx.doi.org/10.1007/978-3-662-64331-0_24},
   DOI={10.1007/978-3-662-64331-0_24},
   booktitle={Financial Cryptography and Data Security},
   publisher={Springer Berlin Heidelberg},
   author={Perez, Daniel and Werner, Sam M. and Xu, Jiahua and Livshits, Benjamin},
   year={2021},
   pages={457–476}
}

@misc{bartoletti22,
      title={Formal Analysis of Lending Pools in Decentralized Finance}, 
      author={Massimo Bartoletti and James Chiang and Tommi Junttila and Alberto Lluch Lafuente and Massimiliano Mirelli and Andrea Vandin},
      year={2022},
      eprint={2206.01333},
      archivePrefix={arXiv},
      primaryClass={cs.SE}
}

@InProceedings{bartoletti21,
    author="Bartoletti, Massimo
    and Chiang, James Hsin-yu
    and Lafuente, Alberto Lluch",
    title="SoK: Lending Pools in Decentralized Finance",
    booktitle="Financial Cryptography and Data Security. FC 2021 International Workshops",
    year="2021",
    publisher="Springer Berlin Heidelberg",
    address="Berlin, Heidelberg",
    pages="553--578",
    isbn="978-3-662-63958-0"
}

@INPROCEEDINGS{gudgeon20_1,
  author={Gudgeon, Lewis and Perez, Daniel and Harz, Dominik and Livshits, Benjamin and Gervais, Arthur},
  booktitle={2020 Crypto Valley Conference on Blockchain Technology (CVCBT)}, 
  title={The Decentralized Financial Crisis}, 
  year={2020},
  volume={},
  number={},
  pages={1-15},
  doi={10.1109/CVCBT50464.2020.00005}
}

@inproceedings{gudgeon20,
    author = {Gudgeon, Lewis and Werner, Sam and Perez, Daniel and Knottenbelt, William J.},
    title = {DeFi Protocols for Loanable Funds: Interest Rates, Liquidity and Market Efficiency},
    year = {2020},
    isbn = {9781450381390},
    publisher = {Association for Computing Machinery},
    address = {New York, NY, USA},
    url = {https://doi.org/10.1145/3419614.3423254},
    doi = {10.1145/3419614.3423254},
    booktitle = {Proceedings of the 2nd ACM Conference on Advances in Financial Technologies},
    pages = {92–112},
    numpages = {21},
    location = {New York, NY, USA},
    series = {AFT '20}
}

@inproceedings{klages20,
    author = {Klages-Mundt, Ariah and Harz, Dominik and Gudgeon, Lewis and Liu, Jun-You and Minca, Andreea},
    title = {Stablecoins 2.0: Economic Foundations and Risk-based Models},
    year = {2020},
    isbn = {9781450381390},
    publisher = {Association for Computing Machinery},
    address = {New York, NY, USA},
    doi = {10.1145/3419614.3423261},
    booktitle = {Proceedings of the 2nd ACM Conference on Advances in Financial Technologies},
    pages = {59–79},
    numpages = {21},
    location = {New York, NY, USA},
    series = {AFT '20}
}

@misc{iragorri21,
      title={Financial intermediation and risk in decentralized lending protocols}, 
      author={Carlos Castro-Iragorri and Julian Ramirez and Sebastian Velez},
      year={2021},
      eprint={2107.14678},
      archivePrefix={arXiv},
      primaryClass={q-fin.GN}
}

@article{john23,
   author = "John, Kose and Kogan, Leonid and Saleh, Fahad",
   title = "Smart Contracts and Decentralized Finance", 
   journal= "Annual Review of Financial Economics",
   year = "2023",
   volume = "15",
   number = "Volume 15, 2023",
   pages = "523-542",
   doi = "https://doi.org/10.1146/annurev-financial-110921-022806",
   publisher = "Annual Reviews",
   issn = "1941-1375",
   type = "Journal Article"
}

@article{cohen23,
    author = {Samuel N Cohen and Marc Sabate-Vidales and Lukasz Szpruch and Mathis Gontier Delaunay},
    title = {The Paradox of Adversarial Liquidation in Decentralised Lending},
    journal = {SSRN Electronic Journal},
    year = {2023},
    publisher = {Social Science Electronic Publishing},
    month = {aug},
    url = {https://doi.org/10.2139/ssrn.4540333},
    doi = {10.2139/ssrn.4540333}
}

@article{lehar22,
  title={Systemic Fragility in Decentralized Markets},
  author={Alfred Lehar and Christine A. Parlour},
  journal={SSRN Electronic Journal},
  year={2022},
  url={https://api.semanticscholar.org/CorpusID:250417631}
}

@misc{szpruch24,
      title={Pricing and hedging of decentralised lending contracts}, 
      author={Lukasz Szpruch and Marc Sabaté Vidales and Tanut Treetanthiploet and Yufei Zhang},
      year={2024},
      eprint={2409.04233},
      archivePrefix={arXiv},
      primaryClass={q-fin.PR},
      url={https://arxiv.org/abs/2409.04233}, 
}

@article{mueller24,
  title = {DeFi Leveraged Trading: Inequitable Costs of Decentralization},
  author = {Mueller, Peter},
  year = {2024}
}

@INPROCEEDINGS{daian20,
  author={Daian, Philip and Goldfeder, Steven and Kell, Tyler and Li, Yunqi and Zhao, Xueyuan and Bentov, Iddo and Breidenbach, Lorenz and Juels, Ari},
  booktitle={2020 IEEE Symposium on Security and Privacy (SP)}, 
  title={Flash Boys 2.0: Frontrunning in Decentralized Exchanges, Miner Extractable Value, and Consensus Instability}, 
  year={2020},
  volume={},
  number={},
  pages={910-927},
  doi={10.1109/SP40000.2020.00040}
}

@inproceedings{li23,
   title={Demystifying DeFi MEV Activities in Flashbots Bundle},
   url={http://dx.doi.org/10.1145/3576915.3616590},
   DOI={10.1145/3576915.3616590},
   booktitle={Proceedings of the 2023 ACM SIGSAC Conference on Computer and Communications Security},
   publisher={ACM},
   author={Li, Zihao and Li, Jianfeng and He, Zheyuan and Luo, Xiapu and Wang, Ting and Ni, Xiaoze and Yang, Wenwu and Chen, Xi and Chen, Ting},
   year={2023},
   month=nov, pages={165–179},
   collection={CCS ’23}
}

@INPROCEEDINGS{qin22,
  author={Qin, Kaihua and Zhou, Liyi and Gervais, Arthur},
  booktitle={2022 IEEE Symposium on Security and Privacy (SP)}, 
  title={Quantifying Blockchain Extractable Value: How dark is the forest?}, 
  year={2022},
  volume={},
  number={},
  pages={198-214},
  doi={10.1109/SP46214.2022.9833734}
}

@InProceedings{wang22,
    author="Wang, Zhipeng
    and Qin, Kaihua
    and Minh, Duc Vu
    and Gervais, Arthur",
    editor="Eyal, Ittay
    and Garay, Juan",
    title="Speculative Multipliers on DeFi: Quantifying On-Chain Leverage Risks",
    booktitle="Financial Cryptography and Data Security",
    year="2022",
    publisher="Springer International Publishing",
    address="Cham",
    pages="38--56"
}

@TechReport{auer22,
  author={Raphael Auer and Jon Frost and Jose María Vidal Pastor},
  title={{Miners as intermediaries: extractable value and market manipulation in crypto and DeFi}},
  year=2022,
  month=Jun,
  institution={Bank for International Settlements},
  type={BIS Bulletins},
  number={58}
}

@misc{eskandari19,
      title={SoK: Transparent Dishonesty: front-running attacks on Blockchain}, 
      author={Shayan Eskandari and Seyedehmahsa Moosavi and Jeremy Clark},
      year={2019},
      eprint={1902.05164},
      archivePrefix={arXiv}
}

@INPROCEEDINGS{zhou21,
  author={Zhou, Liyi and Qin, Kaihua and Torres, Christof Ferreira and Le, Duc V and Gervais, Arthur},
  booktitle={2021 IEEE Symposium on Security and Privacy (SP)}, 
  title={High-Frequency Trading on Decentralized On-Chain Exchanges}, 
  year={2021},
  pages={428-445},
  doi={10.1109/SP40001.2021.00027}
}

@misc{yang23,
      title={SoK: MEV Countermeasures: Theory and Practice}, 
      author={Sen Yang and Fan Zhang and Ken Huang and Xi Chen and Youwei Yang and Feng Zhu},
      year={2023},
      eprint={2212.05111},
      archivePrefix={arXiv},
      primaryClass={cs.CR}
}

@inproceedings{heimach22,
   title={Eliminating Sandwich Attacks with the Help of Game Theory},
   url={http://dx.doi.org/10.1145/3488932.3517390},
   DOI={10.1145/3488932.3517390},
   booktitle={Proceedings of the 2022 ACM on Asia Conference on Computer and Communications Security},
   publisher={ACM},
   author={Heimbach, Lioba and Wattenhofer, Roger},
   year={2022},
   month=may, pages={153–167},
   collection={ASIA CCS ’22}
}

@misc{zhou21-1,
      title={A2MM: Mitigating Frontrunning, Transaction Reordering and Consensus Instability in Decentralized Exchanges}, 
      author={Liyi Zhou and Kaihua Qin and Arthur Gervais},
      year={2021},
      eprint={2106.07371},
      archivePrefix={arXiv},
      primaryClass={cs.CR},
      url={https://arxiv.org/abs/2106.07371}, 
}

@techreport{capponi25,
  title        = {Maximal Extractable Value and Allocative Inefficiencies in Public Blockchains},
  author       = {Capponi, Agostino and Jia, Ruizhe and Wang, Ye},
  institution  = {SSRN},
  type         = {SSRN Scholarly Paper},
  number       = {3997796},
  year         = {2025},
  month        = {February},
  url          = {https://ssrn.com/abstract=3997796},
  doi          = {10.2139/ssrn.3997796}
}

@misc{lee23,
      title={All AMMs are CFMMs. All DeFi markets have invariants. A DeFi market is arbitrage-free if and only if it has an increasing invariant}, 
      author={Roger Lee},
      year={2023},
      eprint={2310.09782},
      archivePrefix={arXiv},
      primaryClass={q-fin.TR},
      url={https://arxiv.org/abs/2310.09782}, 
}

@misc{capponi23,
      title={Decentralized Finance: Protocols, Risks, and Governance}, 
      author={Agostino Capponi and Garud Iyengar and Jay Sethuraman},
      year={2023},
      eprint={2312.01018},
      archivePrefix={arXiv},
      primaryClass={q-fin.TR},
      url={https://arxiv.org/abs/2312.01018}, 
}

@misc{iftikhar25,
      title={Automated Risk Management Mechanisms in DeFi Lending Protocols: A Crosschain Comparative Analysis of Aave and Compound}, 
      author={Erum Iftikhar and Wei Wei and John Cartlidge},
      year={2025},
      eprint={2506.12855},
      archivePrefix={arXiv},
      url={https://arxiv.org/abs/2506.12855}, 
}

@inproceedings{angeris20,
    author = {Angeris, Guillermo and Chitra, Tarun},
    title = {Improved Price Oracles: Constant Function Market Makers},
    year = {2020},
    isbn = {9781450381390},
    publisher = {Association for Computing Machinery},
    address = {New York, NY, USA},
    url = {https://doi.org/10.1145/3419614.3423251},
    doi = {10.1145/3419614.3423251},
    booktitle = {Proceedings of the 2nd ACM Conference on Advances in Financial Technologies},
    pages = {80–91},
    numpages = {12},
    location = {New York, NY, USA},
    series = {AFT '20}
}

@misc{angeris21,
      title={Constant Function Market Makers: Multi-Asset Trades via Convex Optimization}, 
      author={Guillermo Angeris and Akshay Agrawal and Alex Evans and Tarun Chitra and Stephen Boyd},
      year={2021},
      eprint={2107.12484},
      archivePrefix={arXiv},
      primaryClass={math.OC},
      url={https://arxiv.org/abs/2107.12484}, 
}

@ARTICLE{cong21,
    title = {Tokenomics: Dynamic Adoption and Valuation},
    author = {Cong, Lin and Li, Ye and Wang, Neng},
    year = {2021},
    journal = {The Review of Financial Studies},
    volume = {34},
    number = {3},
    pages = {1105-1155}
}

@misc{capponi21,
      title={The Adoption of Blockchain-based Decentralized Exchanges}, 
      author={Agostino Capponi and Ruizhe Jia},
      year={2021},
      eprint={2103.08842},
      archivePrefix={arXiv},
      primaryClass={q-fin.TR}
}

@incollection{imf22,
  author       = {{International Monetary Fund}},
  title        = {The Rapid Growth of Fintech: Vulnerabilities and Challenges for Financial Stability},
  booktitle    = {Global Financial Stability Report},
  year         = {2022},
  month        = apr,
  publisher    = {International Monetary Fund},
  address      = {Washington, DC},
  chapter      = {3},
  url          = {https://www.imf.org/-/media/files/publications/gfsr/2022/april/english/ch3.pdf}
}
\newpage
\appendix
\section{Proofs}\label{sec:app}
\subsection{Lemma~\ref{prop:dp} Proof}\label{app:dp}
There are two conditions that should be checked in order for the value function in~\eqref{eq:V} to be the sum of the two profit functions, i.e., step (7) in Algorithm~\ref{algo:pi_liq}.
\begin{itemize}
    \item Profit subadditivity:
    \begin{align*}
        \pi_{liq}(x_1+x_2) \leq \pi_{liq}(x_1) + \pi_{liq}(x_2; x_1)
    \end{align*}
where the notation $x_2;x_1$ means using the value $x_2$ starting from $x_1$.
    \item Health factor constraint:
    \begin{align*}
        & HF(x_1+x_2) \geq HF(x_2; x_1) \\
        & \text{s.t.} \quad HF(x_1) \leq CF
    \end{align*}
As based on the profit subadditivity condition we realized that smaller transactions are better, they also should be able to increase the HF at a slower rate (up to $CF$), so that we will have more opportunities to liquidate. Hence, based on~\eqref{eq:hf_schem}:
\begin{align*}
    HF(x_1+x_2) &= \frac{\theta(c-(x_1+x_2)(1+\ell))\frac{B_1}{A_1}}{b-(x_1+x_2)\frac{B_0}{A_0}}\\
    HF(x_2;x_1) &= \frac{\theta(c-x_1(1+\ell)-x_2(1+\ell))\frac{B_2}{A_2}}{b-x_1\frac{B_0}{A_0}-x_2\frac{B_1}{A_1}}
\end{align*}
As $\frac{B_0}{A_0}\geq\frac{B_1}{A_1}\geq\frac{B_2}{A_2}$, the denominator of $HF(x_2;x_1)$ is larger than $HF(x_1+x_2)$ and the numerator smaller, making the inequality to hold.
\end{itemize}

As highlighted above, the optimal liquidation strategy is to incrementally claim and sell marginal $dx$ units of the collateral. A question that might rise is if liquidating all at once would yield a higher profit. If we liquidate all in one transaction, the profit is as:
\begin{align*}
    \pi_{one}=\frac{Bx(1+\ell)(1-\gamma)}{A+x(1+\ell)(1-\gamma)}-\frac{Bx}{A}
\end{align*}

If we liquidate by incrementally small amounts, each transaction's value is calculated by multiplying the quantity sold by the execution price. The relevant value for the transaction is based on the updated price after selling up to $x$, or at the price using $B - y$, where $y$ is obtained from~\eqref{eq:y}, rather than the initial price $B$ because every sale lowers the price. Consequently, $\frac{B - y}{A+x(1-\gamma)(1+\ell)}$ is used as the price per unit at that point in time to determine the value of the subsequent small transaction $dx$:
\begin{align}
    \frac{(B-y)dx(1+\ell)(1-\gamma)}{A+x(1+\ell)(1-\gamma)+dx(1+\ell)(1-\gamma)}-\frac{(B-y)dx}{A+x(1+\ell)(1-\gamma)}\nonumber\\
    =\frac{(B-y) dx((1+\ell)(1-\gamma)-1)}{A+x(1+\ell)(1-\gamma)}
    \label{eq:dx_diff_fee}
\end{align}
The total profit from these liquidations then would be the integral up to $x_{liq}$:
\begin{align}
    \pi_{liq}=\int_0^{x_{liq}}& \frac{(B-y)((1-\gamma)(1+\ell)-1)}{A+x(1-\gamma)(1+\ell)}dx\nonumber\\
    &= \frac{B((1-\gamma)(1+\ell)-1)x_{liq} }{A+ x_{liq}(1-\gamma)(1+\ell)}
    \label{eq:dx_profit_fee}
\end{align}
where $x_{liq}=\min\{x_c, x_b, x_{cf}\}$ and $x_c$ and $x_b$ can be obtained from~\eqref{eq:xc} and~\eqref{eq:xb} with $\kappa=1$ respectively and $x_{cf}$ is detailed in~\eqref{eq:x_cf}. 

To test which one has a higher profit, we compare the profit from the liquidation types ($\pi_{liq}, \text{vs.}\,\pi_{one}$). Note that, for any $x \geq 0$:
\[\frac{B((1-\gamma)(1+\ell)-1)x }{A+ x(1-\gamma)(1+\ell)} = \frac{B x (1+\ell)(1-\gamma)}{A + x(1+\ell)(1-\gamma)} - \frac{Bx}{A + x(1+\ell)(1-\gamma)} \geq \frac{B x (1+\ell)(1-\gamma)}{A + x(1+\ell)(1-\gamma)} - \frac{B x}{A}\]
This means liquidating by smaller transactions is more profitable.

\end{document}